\newtheorem{lemma}{Lemma}
\newtheorem{theorem}{Theorem}
\newtheorem{claim}{Claim}
\newcommand{\mA}{\mathcal{A}}
\newcommand{\XX}{\mathcal{X}}
\newcommand{\DD}{\mathcal{D}}
\newcommand{\OO}{\mathcal{O}}
\newcommand{\TT}{\mathcal{T}}
\newcommand{\GG}{\mathcal{G}}
\newcommand{\EE}{\mathcal{E}}
\newcommand{\probl}[3]{
\begin{flushleft}
\fbox{
\begin{minipage}{16cm}
\noindent {\sc #1}\\
          {\bf Input:} #2\\
          {\bf Output:} #3
\end{minipage}}
\medskip
\end{flushleft}
}
\newcommand{\paraprobl}[4]
{
  \begin{flushleft}
    \fbox{
      \begin{minipage}{16cm}
        \noindent {\textsc {#1}}\\
        {\bf Input:} #2\\
        {\bf Parameter:} #4\\
        {\bf Question:} #3
      \end{minipage}
    }
  \end{flushleft}
}
\newcommand {\np}       {{\sf NP}}
\newcommand {\nph}      {{\sf NP}\textrm{-hard}}
\newcommand {\fpt}      {{\sf FPT}}
\newcommand {\xp}       {{\sf XP}}
\newcommand {\wone}     {{\sf W}[1]}
\newcommand {\wonehard} {{\sf W}[1]\textrm{-hard}}
\newcommand {\nohyp}    {{\sf P}=\np}
\newcommand {\pw}       {{\sf pw}}
\newcommand {\tw}       {{\sf tw}}
\newcommand {\runningtitle}[1] {\vspace{0.5ex}\noindent{\textbf{\boldmath #1:}}}
\newcommand {\norm}[1]{\left\Vert#1\right\Vert}
\newcommand {\abs}[1]{\left\vert#1\right\vert}
\newcommand {\set}[1]{\left\{#1\right\}}
\newcommand {\defined} {\stackrel{def} {=}}
\newcommand{\sm}{\setminus}
\newcommand{\es}{\emptyset}
\newcommand {\OneThreeSAT} {{\sc Monotone 1 in 3-SAT }}
\newcommand {\mcc} {\textsc{Multicolored Clique}}
\newcommand {\mrcf}[1] {{\sc $#1$-MRCF}}
\newcommand {\rmrcf} {{\mrcf{r}}}
\newcommand {\rc} {{\sf rc}}
\newcommand {\tc} {{\sf tc}}
\newcommand {\inst} {(G,\chi,c,k)}
\newcommand {\instprime} {(G',\chi',c',k')}
\newcommand {\kmin} {{k_{\rm min}}}
\newcommand{\vect}[1]{\mathbf{#1}}
\newcommand{\zero}{\vect{0}}
\newcommand {\avgdeg} {\overline{{\sf deg}}(G)}
\newcommand{\commentfig}[1]{#1}
\title{{Minimum Reload Cost Graph Factors  Graphs\thanks{\href{http://www.cs.upc.edu/~sedthilk/oapf/apexpse_foto.jpg}{}Emails:\! {\scriptsize  \sf \href{mailto:julien.baste@lip6.fr}{julien.baste@lip6.fr},\!
\href{mailto:didem.gozupek@gtu.edu.tr}{didem.gozupek@gtu.edu.tr},\!
\href{mailto:cmshalom@telhai.ac.il}{cmshalom@telhai.ac.il},\!
\href{mailto:sedthilk@thilikos.info}{sedthilk@thilikos.info}.}
}}}
\author{Julien Baste%
\thanks{Sorbonne Université, Laboratoire d'Informatique de Paris 6, LIP6, Paris, France}%
\and 
Didem G\"{o}z\"{u}pek%
\thanks{Department of Computer Engineering, Gebze Technical University, Kocaeli, Turkey.}
\and 
Mordechai Shalom%
\thanks{TelHai College, Upper Galilee, 12210, Israel.}
\and
Dimitrios  M. Thilikos\thanks{AlGCo project-team, LIRMM, CNRS, Universit\'e de Montpellier, Montpellier, France.}$\ ^,$\thanks{Supported by projects DEMOGRAPH (ANR-16-CE40-0028) and ESIGMA (ANR-17-CE23-0010) and  by the Research Council of Norway and the French Ministry of Europe and Foreign Affairs, via the Franco-Norwegian project PHC AURORA 2019.}}
\date{}
\begin{document}

\maketitle

\vspace{-1.3cm}
\begin{abstract}
\noindent The concept of \emph{Reload cost} in a graph refers to the cost that occurs while traversing a vertex via two of its incident edges. 
This cost is uniquely determined by the colors of the two edges. 
This concept has various applications in transportation networks, communication networks, and energy distribution networks. 
Various problems using this model are defined and studied in the literature.
The problem of finding a spanning tree whose diameter with respect to the reload costs is smallest possible,
the problems of finding a path, trail or walk with minimum total reload cost between two given vertices, 
problems about finding a proper edge coloring of a graph such that the total reload cost is minimized, 
the problem of finding a spanning tree such that the sum of the reload costs of all paths between all pairs of vertices is minimized,
and the problem of finding a set of cycles of minimum reload cost, that cover all the vertices of a graph, 
are examples of such problems.
In this work we focus on the last problem. 
Noting that a cycle cover of a graph is a 2-factor of it, we generalize the problem the the problem of finding an $r$-factor of minimum reload cost of an edge colored graph.
We prove several {\sf NP}-hardness results for special cases of the problem. 
Namely, bounded degree graphs, planar graphs, bounded total cost, and bounded number of distinct costs. 
For the special case of $r=2$, our results imply an improved {\sf NP}-hardness result. 
On the positive side, we present a polynomial-time solvable special case which provides a tight boundary between the polynomial and hard cases 
in terms of $r$ and the and the maximum degree of the graph. 
We then investigate the parameterized complexity of the problem, prove \wone-hardness results and present an \fpt-algorithm.
\end{abstract}
\medskip

\noindent{\bf keywords:} Parameterized Complexity, Graph Factors, Reload Costs

\section{Introduction}

Edge-colored graphs can be used to model optimization problems in diverse fields such as bioinformatics, communication networks, and transportation networks. 
\emph{Reload cost} in an edge-colored graph refers to the cost that occurs while traversing a vertex via two of its incident edges. 
This cost is uniquely determined by the colors of the two edges.

The reload cost concept has various applications in  transportation networks, communication networks, and energy distribution networks. For instance, a multi-modal cargo transportation network involves different means of transportation, where the (un)loading of cargo at transfer points is costly \cite{Ga08}. In energy distribution networks, transferring energy between different carriers cause energy losses and reload cost concept can be used to model this situation \cite{Ga08}. In communication networks, routing often requires switching between different technologies such as cable and fiber, where data conversion incurs high costs. Switching between different service providers in a communication network also causes switching costs \cite{Ga08}. Recently, \emph{dynamic spectrum access networks}, a.k.a. cognitive radio networks, received a lot of attention in the communication networks research community. Unlike other wireless networks, cognitive radio networks are envisioned to operate in a wide range of spectrum; therefore, frequency switching has adverse effects in delay and energy consumption \cite{AAK+13, GBA13, celik2016green}. This frequency switching cost depends on the frequency separation distance; hence, it corresponds to reload costs.

The reload cost concept was first introduced by Wirth and Steffan \cite{WiSt01} who focused on the problem of finding a spanning tree whose diameter with respect to the reload costs is smallest possible. 
Other works also focused on numerous optimization problems regarding reload costs: 
the problems of finding a path, trail or walk with minimum total reload cost between two given vertices \cite{GLMM10}, 
numerous path, tour, and flow problems \cite{AGM11}, 
the minimum changeover cost arborescence problem \cite{GGM11, GozupekSVZ14, gozupek2016constructing, GOP+16}, 
problems about finding a proper edge coloring of a graph such that the total reload cost is minimized \cite{gozupek2016edge}, 
and the problem of finding a spanning tree such that the sum of the reload costs of all paths between all pairs of vertices is minimized \cite{gamvros2012reload}.

An $r$-factor of a graph is an $r$-regular spanning subgraph. 
A $2$-factor is also called a \emph{cycle cover} and has many applications in areas such as computer graphics and computational geometry \cite{meijer2009algorithm}, for instance for fast rendering of 3D scenes. 
In an edge-weighted graph, the problem of finding a cycle cover with minimum cost can be solved in polynomial-time \cite{schrijver2003combinatorial}. 
Its reload cost counterpart was studied by Galbiati et.al. in \cite{GGM14}. 
In particular, they proved that the minimum reload cost cycle cover problem is $\nph$ even when the number of colors is $2$, the reload costs are symmetric and satisfy the triangle inequality. 
In this work, we build on this work by studying the minimum reload cost $r$-factor problem, which is a more generalized version of the minimum reload cost cycle cover problem. 
In particular, we prove several {\sf NP}-hardness results  for the special cases of this problem. 
Namely, bounded degree graphs, planar graphs, bounded total cost, and bounded number of distinct costs. 
For the special case of $r=2$, we prove an {\sf NP}-hardness result stronger than the one in \cite{GGM14}. 
On the positive side, we present a polynomial-time solvable special case. 
We then investigate the parameterized complexity of this problem, prove {\sf W}[1]-hardness results and present a fixed parameter tractable algorithm.

\section{Preliminaries}

\vspace{2mm}\runningtitle{Sets, vectors}
Given a non-negative integer $n$, we 
denote by $\mathbb{N}_{\geq n}$ the set of all integers $x$ such that $x \geq n$. 
If $n_{1},n_{2} \in \mathbb{N}_{\geq 0}$, we denote by  $[n_{1},n_{2}]$ 
the set of integers $x$ such that $n_1 \leq x \leq n_2$. 
We also use $[n]$ instead of $[1,n]$.
Given a finite set $X$ and an integer $s \in \mathbb{N}_{\geq 0}$, we denote by ${X\choose s}$ the set of all subsets of $A$ with exactly $s$ elements. 
For a set $X$ and an element $x$ we use $X + x$ and $X-x$ as shorthands for $X \cup \set{x}$ and $X \sm \set{x}$, respectively.
For two vectors $\vect{u}=(u_1,\ldots,u_d)$ and $\vect{v}=(v_1,\ldots,v_d)$ over the reals, we write $\vect{u} \leq \vect{v}$ if $u_i \leq v_i$ for every $i \in [d]$.
  
\vspace{2mm}\runningtitle{Graphs}   
All graphs we consider in this paper are undirected, finite, and without self loops or multiple edges. 
Given a graph $G$, we denote by $V(G)$ the set of vertices of $G$ and by $E(G)$ the set of edges of $G$. 
We say that a vertex $v\in V(G)$ and an edge $e\in E(G)$ are {\em incident} if $v\in e$, that is, $v$ is an endpoint of $e$.
Given a vertex $v\in V(G)$, we denote by $E_{G}(v)$ the set of edges of $G$ that are incident to $v$.
The {\em degree} of $v$ in $G$, denoted by ${\sf deg}_{G}(v)$ is $|E_{G}(v)|$. 
We also define the {\em maximum degree} of $G$ as  $\Delta(G)=\max \{{\sf deg}_{G}(v) \mid v \in V(G) \}$, the {\em minimum degree} of $G$ as  $\delta(G)=\min \{ {\sf deg}_{G}(v) \mid v \in V(G) \}$, and the average degree of $G$ as $\avgdeg$.
For a subset $X$ of $V(G)$, we denote by $G[X]$ the subgraph of $G$ induced by $X$.
A graph is {\em $r$-regular} if all its vertices have degree $r$.

We say that a graph $H$ is a {\em factor} of a graph $G$ when $V(H)=V(G)$ and $E(H)\subseteq E(G)$.
An $r$-regular factor of $G$ is termed an {\em $r$-factor} of $G$.

\vspace{2mm}\runningtitle{Parameterized complexity}
We refer the reader to~\cite{CyganFKLMPPS15,DF13} for basic background on parameterized complexity, and we recall here only some basic definitions.
A \emph{parameterized problem} is a language $L \subseteq \Sigma^* \times \mathbb{N}$.  For an instance $I=(x,k) \in \Sigma^* \times \mathbb{N}$, $k$ is called the \emph{parameter}.

A parameterized problem is \emph{fixed-parameter tractable} ({\fpt}) 
if there exists an algorithm $\mA$, 
a computable function $f$, and a constant $c$ such that given an instance $I=(x,k)$,
$\mA$ (called an {\fpt} \emph{algorithm}) correctly decides whether $I \in L$ in time bounded by $f(k) \cdot |I|^c$.

A parameterized problem is in {\xp} if there exists an algorithm $\mA$ and two computable functions $f$ and $g$ such that given an instance $I=(x,k)$,
$\mA$ (called an {\xp} \emph{algorithm}) correctly decides whether $I \in L$ in time bounded by $f(k) \cdot |I|^{g(k)}$. 

A parameterized problem with instances of the form $I=(x,k)$ is \emph{{\sf para}-\nph} if it is {\nph} for some fixed {\sl constant} value of the parameter $k$. 
Note that, unless $\nohyp$, a {\sf para}-{\nph} problem cannot be in \xp, hence it cannot be {\fpt} either.

Within parameterized problems, the class {\wone} may be seen as the parameterized equivalent to the class {\np} of classical optimization problems. 
Without entering into details (see~\cite{CyganFKLMPPS15,DF13} for the formal definitions), a parameterized problem being {\wonehard} can be seen as a strong evidence that this problem is {\sl not} \fpt. 
To transfer {\sf W}[1]-hardness from one problem to another, one uses an $\fpt$ reduction, which given an input $I=(x,k)$ of the source problem, computes in time $f(k) \cdot |I|^c$, for some computable function $f$ and a function $c$, an equivalent instance $I'=(x',k')$ of the target problem such that $k'$ is bounded by a function depending only on $k$. 
The following problem is a {\wonehard} problem that we will use in our reductions.

\paraprobl
{\mcc}
{A graph $G$, an integer $k$, a coloring function $\chi: V(G) \to [k]$.}
{Does $G$ contain a clique on $k$ vertices with one vertex from each color class?}
{$k$.}

{\mcc} is known to be {\wonehard} on general graphs, even in the special case where all color classes have the same number of
vertices~\cite{Pietrzak03}. 
Clearly, we can also assume that every color class is an independent set since the problem is indifferent to edges within the same color class. 

\vspace{2mm}\runningtitle{Tree decompositions} 
A \emph{tree decomposition} of a graph $G=(V,E)$ is a pair $\DD=(T,\XX)$, where $T$ is a tree
and $\XX= \{ X_t \mid t \in V(T) \}$ is a collection of subsets of $V(G)$ such that:

\begin{itemize}
\item $\bigcup_{t \in V(T)} X_t = V$,
\item for every edge $u v \in E$, there is a $t \in V(T)$ such that $\set{u, v} \subseteq X_t$, and
\item for every $\set{ x, y, z} \subseteq V(T)$ such that $z$ lies on the unique path between $x$ and $y$ in $T$,  $X_x \cap X_y \subseteq X_z$.
\end{itemize}
We call the vertices of $T$ {\em nodes} of $\DD$ and the sets in $\XX$ {\em bags} of $\DD$. 
The width of $\DD$ is $\max_{t \in V(T)} |X_t| - 1$.
The \emph{treewidth} of $G$, denoted by $\tw(G)$, is the smallest integer $w$ such that there exists a tree decomposition of $G$ of width $w$.
A tree decomposition in which the tree $T$ is restricted to be a path is called a \emph{path decomposition}. 
The \emph{pathwidth} of $G$, denoted by $\pw(G)$, is the smallest integer $w$ such that there exists a path decomposition of $G$ of width $w$.

A tree decomposition is {\em rooted} if we distinguish in $T$ some specific vertex $r$, and consider $T$ as a rooted (on $r$) tree.
We denote such a tree decomposition by a triple $\DD=(T,\XX,r)$.

\vspace{2mm}\runningtitle{Nice tree decompositions} 
Let $\DD=(T,\XX, r)$ be a rooted tree decomposition of $G$, 
and  $\GG = \{ G_t \mid t \in V(T) \}$ be a collection of subgraphs of $G$.
We say that the ordered pair $(\DD,\GG)$ is \emph{nice} if the following conditions hold:

\begin{itemize}
\item $X_r = \es$ and $G_r=G$,
\item every node of $\DD$ has at most two children in $T$,
\item for each leaf $t \in V(T)$, $X_t = \es$ and $G_t=(\es,\es).$ Such a node $t$ is called a {\em leaf node},
\item if $t \in V(T)$ has exactly two children $t'$ and $t''$, then $X_t = X_{t'} = X_{t''}$, $G_t = G_{t'} \cup G_{t''}$, and $E(G_{t'}) \cap E(G_{t''}) = \emptyset$. 
The node $t$ is called a \emph{join node}.
\item if $t \in V(T)$ has exactly one child $t'$, then exactly one of the following holds.

\begin{itemize}
\item $X_t = X_{t'} + v$ for some $v \notin X_{t'}$ and $G_t=(V(G_{t'}) + v , E(G_{t'}))$.
The node $t$ is called \emph{vertex-introduce node} and the vertex $v$ is the {\em introduced vertex} of $X_t$.

\item $X_t = X_{t'}$ and $G_t=(G_{t'},E(G_{t'}) + e)$ where $e$ is an edge of $G$ with endpoints in $X_t$.
The node $t$ is called \emph{edge-introduce node} and the edge $e$ is the {\em introduced edge} of $X_t$.

\item $X_t = X_{t'} - v$ for some $v \in X_{t'}$ and $G_t=G_{t'}$.
The node $t$ is called \emph{forget node} and $v$ is the {\em forget vertex} of $X_t$.
\end{itemize}
\end{itemize}

The notion of a nice pair defined above is essentially the same
as the one of nice tree decomposition in~\cite{CyganNPPRW11} (which in turn is an enhancement of the original one, introduced in~\cite{Klo94}).
As already argued in~\cite{CyganNPPRW11,Klo94}, given a tree decomposition, it is possible to transform it in polynomial time
to a tree decomposition $\DD$ of the same width and construct a collection $\GG$ such that $(\DD, \GG)$ is nice.

\vspace{2mm}\runningtitle{Reload cost model} 
For reload costs, we follow the notation and terminology defined by \cite{WiSt01}. 
We consider an edge-colored graph $G$ where edge colors are taken from a finite set $X$ and 
the coloring function is $\chi: E(G) \rightarrow X$. The reload costs are given by a function $c:X^2 \rightarrow  \mathbb{N}_{\geq 0}$ where 
$c(x_{1},x_{2})=c(x_{2},x_{1})$ for each $(x_{1},x_{2})\in X^{2}$. 
The cost of {\em traversing} two incident edges $e_1$, $e_2$ of $G$ is $\tc(e_1, e_2)=c(\chi(e_1), \chi(e_2))$. 
Given a subgraph $H$ of $G$ and a vertex $v\in V(H)$, we define the {\em reload cost}
of $v$ in $H$ as 
$\rc_{\chi,c}(H,v)=\sum_{\{e_1,e_2\}\in{E_{H}(v)\choose 2}} {\tc}(e_1,e_2)$
and the {\em reload cost} of $H$ as 
$\rc_{\chi,c}(H)=\sum_{v\in V(H)}\rc_{\chi,c}(H,v)$.
When $\chi$ and $c$ are clear from the context, we write $\rc(v)$ and $\rc(H)$ instead.

\vspace{2mm}\runningtitle{Problem statement} 
The problem we study in this paper can be formally defined as follows for every $r \in \mathbb{N}_{\geq 2}$:

\probl
{Minimum Reload Cost $r$-Factor (\rmrcf)}
{A graph $G$, an edge-coloring $\chi$, a reload cost function $c$, and a non-negative integer $k$.}
{Is there an $r$-factor $H$ of $G$ with reload cost at most $k$, i.e., $\rc(H) \leq k$?}

Given an instance $\inst$ of $r$-MRCF we consider the following parameters:
\begin{itemize}
\item the maximum degree $\Delta(G)$ of $G$, 
\item the treewidth $\tw(G)$ of $G$,
\item the number of colors $q=|X|$,
\item the number of distinct costs: $d=|\{c(x_1,x_2)\mid (x_{1},x_{2})\in X^2\}|$,
\item the minimum traversal cost $c_{min}=\min\{c(x_1,x_2)\mid (x_{1},x_{2})\in X^2\}$, and
\item the total cost $k$.
\end{itemize}
Clearly, we can assume that $\Delta(G) \geq r+1$, and also $\delta(G) \geq r$ since otherwise the instance is trivial.
Let $\kmin = c_{min} \cdot |V(G)| \cdot {r \choose 2}$.
Note that the reload cost of every $r$-factor is at least $\kmin$.
Therefore, we can also assume that $k \geq k_{\rm min}$.

A summary of the results regarding the classical and parameterized complexity of the $r$-MRCF problem is shown in Table \ref{writings} and Table \ref{magnetic}, respectively.

\begin{table}[!htbp]
\begin{center}
\begin{tabular}{ |c|c|c|c|c|l| } 
\hline
$G$       & $\Delta(G)$  & $d$   & $k$       & $q$           &  \\
\hline
          & $\le r+2$ & 2     & $k_{min}$ & $\min\{r,3\}$ & $\nph$ (Theorem \ref{consiste})\\  
\hline
          & $r+1$     &       &           &               & Polynomial (Theorem \ref{consists})\\ 
\hline
 Planar   & $\le r+4$ & 2     & $k_{min}$ & 7             & $\nph$ (Theorem \ref{adorning}) \\  
\hline
\end{tabular}
\caption{Summary of classical complexity results for the $r$-MRCF problem.}\label{writings}
\end{center}

\begin{center}
\begin{tabular}{|c|c|c|c|c|l|} 
\hline
Parameter & $d$   &   $k$     & average degree  & \\ 
\hline
$\pw(G)$  &   2   & $k_{min}$ & $< r + \left\{ \begin{array}{ll}
                                4 & \textrm{~if~} r=2 \\ 4/3 & \textrm{otherwise}. 
                                \end{array}\right.$ & {\wonehard} (Theorem \ref{dispense})\\ 
\hline
$ \tw(G) + \min \{q, \Delta(G)\}$ &  & & & \fpt (Theorem \ref{softened})\\ 
\hline
\end{tabular}
\caption{Summary of parameterized complexity results for the $r$-MRCF problem.}\label{magnetic}
\end{center}
\end{table}

\section{Classical Complexity of $r$-MRCF}
The following construct will be used in our reductions.
A {\em diamond} is a graph on four vertices and five edges, that is obtained by adding one chord to a cycle on four vertices.
Clearly, a diamond contains two vertices of degree two and two vertices of degree three.
The degree two vertices are termed as the {\em tip}s of the diamond.
A {\em joker} is a monochromatic diamond, that is, a diamond whose edges have the same color.
In our reductions, every a joker $J$ will have exactly one vertex adjacent to other vertices of the graph.
This vertex will always be a tip of $J$, and we will term it as the \emph{connecting tip} of $J$.
Given a joker $J$ and a 2-factor $F$, it is easy to see that exactly one of the following happens:
\begin{itemize}
\item{} $F \cap E(J)$ is the $4$-cycle of $J$, and $F \sm E(J)$ does not contain any edges incident to $J$.
\item{} $F \cap E(J)$ is a triangle of $J$, $F \sm E(J)$ contains exactly two edges incident to $J$ both of which are incident its connecting tip.
\end{itemize}
Furthemore, since our cost functions satisfy $c(\lambda,\lambda)=0$ for every color $\lambda$, and $F \cap E(J)$ is always a cycle, the joker edges do not affect the cost of $F$. When we describe a $2$-factor $F$, these properties allow us to leave the edges $F \cap E(J)$ unspecified since they are implied by the edges of $F \sm E(J)$.
Such a partial description is valid if and only if the connecting tip have degree zero or two. 
Finally, a 5-\emph{joker} is a cycle on five vertices with an added chord.
This graph has one triangle with one degree 2 vertex that we will refer to as \emph{the tip} of the $5$-joker.
Note that a $5$-joker has all the properties of a joker. 

Another construct that we use in our reductions is a graph $Q_\ell$ that is obtained from the clique on $\ell+1$ vertices by subdividing $\ell-2$ arbitrary edges twice (into three edges) and removing the middle edge of each one.
Clearly, $Q_\ell$ contains $\ell+1$ vertices of degree $\ell$ and $2(\ell-2)$ vertices of degree one. In total $G_{\ell}$ has $3\ell-3$ vertices.

Galbiati et.al. proved in \cite{GGM14} that {\mrcf{2}}, a.k.a. the minimum reload cost cycle cover problem, is $\nph$ even when the number of colors is $2$, the reload costs are symmetric and satisfy the triangle inequality. In the following, we obtain a hardness result for {\rmrcf}, which in particular implies a stronger hardness result for the special case of $r=2$. 

\begin{theorem}\label{consiste}
{\rmrcf} is $\nph$ for every $r \geq 2$ even when $\Delta(G) \leq r+2$, $d=2$, $k=\kmin=0$ and $q=\min \{r, 3\}$.
\end{theorem}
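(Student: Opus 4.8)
The plan is to reduce from \OneThreeSAT, i.e.\ \textsc{Monotone 1-in-3-SAT}, which asks, given a CNF formula in which every clause contains exactly three positive literals, whether there is a truth assignment making exactly one literal true per clause. This problem is well known to be {\nph}. The key feature we exploit is that we need a gadget construction in which an $r$-factor exists with cost exactly $\kmin=0$ if and only if the formula is satisfiable by a $1$-in-$3$ assignment; since $c_{min}=0$ here, $\kmin=0$, so ``cost $0$'' just means ``cost $0$'', and we must design everything so that the only way to achieve cost $0$ is to make consistent ``true/false'' choices.

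The core building block is a \emph{variable gadget} that forces a binary choice: for each variable $x$ we build a small subgraph, coloured with (at most) $\min\{r,3\}$ colours, having exactly two $r$-factors, one encoding $x=\mathrm{true}$ and one encoding $x=\mathrm{false}$, both of cost $0$. To propagate a variable's value to each of its clauses we attach, to the appropriate ``side'' of the variable gadget, an edge of a distinguished colour leading into the clause gadget; jokers (and $5$-jokers) are used throughout as degree-padding devices so that every vertex other than the chosen ``carrier'' vertices sits at degree exactly $r$ no matter which local $r$-factor is picked, and since jokers are monochromatic with $c(\lambda,\lambda)=0$ they never contribute to the cost. The $Q_\ell$ construct, which has degree-$\ell$ vertices together with many degree-one vertices, is the natural way to let a chosen vertex ``absorb'' a tunable number of pendant half-edges so that we can fine-tune degrees to $r$ while keeping $\Delta(G)\le r+2$; this is exactly the role of the ``$\le r+2$'' slack in the statement (one or two extra edges at the vertices where a choice is being made). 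The \emph{clause gadget} for a clause $C=(x\vee y\vee z)$ is then designed so that it admits a cost-$0$ $r$-factor precisely when exactly one of the three incoming carrier edges (from the gadgets of $x,y,z$) is ``active''; concretely one sets the reload cost function so that $c$ of the carrier colour with itself is $0$ but $c$ of the carrier colour with the ``internal'' clause colour is positive, forcing at most one active carrier, while a counting/parity argument on the clause gadget's degrees forces at least one active carrier. With $d=2$ distinct costs ($0$ and some positive value, say $1$) this is achievable.

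In more detail, the steps I would carry out are: (1) fix the reduction target \OneThreeSAT{} and recall its {\sf NP}-hardness; (2) describe the variable gadget, verify it has exactly the two intended cost-$0$ $r$-factors and that its non-carrier vertices always have degree $r$, and check the colour count is $\le\min\{r,3\}$ (for $r=2$ we have $\le 2$ colours available and the gadget must be monochromatic-plus-one-carrier-colour; for $r\ge 3$ we may use $3$ colours, which is where the ``$q=\min\{r,3\}$'' bound comes from); (3) describe the clause gadget and prove the ``exactly one active carrier'' equivalence; (4) assemble the whole graph $G$, connecting each variable gadget to the clause gadgets containing that variable via carrier edges, padding all remaining degree deficiencies with jokers, $5$-jokers and $Q_\ell$'s, and verifying $\Delta(G)\le r+2$; (5) prove the two directions of correctness: a $1$-in-$3$ satisfying assignment yields a cost-$0$ $r$-factor by taking the corresponding local factor in each variable gadget and the matching factor in each clause gadget, and conversely a cost-$0$ $r$-factor, because every positive reload cost is avoided, must pick a consistent local choice at each variable gadget and exactly one active carrier at each clause gadget, which reads off a $1$-in-$3$ assignment; (6) note the reduction is polynomial.

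The main obstacle will be step (3) together with the degree bookkeeping in step (4): getting a clause gadget whose $r$-factors correspond \emph{exactly} to the ``1-in-3'' constraint—not ``at least one'' or ``an odd number''—while simultaneously (a) keeping the maximum degree at $r+2$, (b) using only $2$ distinct reload costs, and (c) using only $\min\{r,3\}$ colours, is delicate, because the three parameters pull against each other: more colours would make the clause logic easy, but we are capped at $3$ (and at $2$ when $r=2$), so the ``1-in-3'' selection must instead be enforced largely through degree constraints and joker padding rather than through colour-cost penalties. The case $r=2$ is the tightest (only $2$ colours, so essentially one ``structural'' colour and one ``carrier'' colour), and I expect the clause gadget there to require the $5$-joker rather than an ordinary joker, precisely as foreshadowed by the remark that ``a $5$-joker has all the properties of a joker''; handling $r=2$ and $r\ge 3$ uniformly, or splitting into these two sub-cases cleanly, is where most of the care goes.
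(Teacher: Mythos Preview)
Your high-level plan matches the paper's: both reduce from \textsc{Monotone 1-in-3-SAT}, both use variable gadgets offering a binary (all-red / all-blue) choice, clause gadgets enforcing the 1-in-3 condition, and jokers for degree padding. Where the paper differs from your proposal is in the overall architecture and in several of the details you speculate about. The paper does \emph{not} build gadgets directly for each $r$; it first proves the $r=2$ case in full (with $q=2$, $\Delta\le 4$), and then lifts to every $r\ge 3$ by a second reduction \emph{from the $r=2$ instance}: add one new ``white'' colour with zero cost against everything, pair up the vertices, and attach a copy of $Q_r$ to each pair so that every original vertex gains exactly $r-2$ white edges. Since the new vertices have degree exactly $r$, every $r$-factor must contain all white edges, and stripping them recovers a $2$-factor of the original graph with the same cost. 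This is why $q=3$ for $r\ge 3$ and why $Q_r$ appears only as a lifting device, not as an internal degree-tuning tool inside the gadgets as you suggest.

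A couple of your specific guesses are off and would cost you time. The $5$-joker is not used anywhere in this theorem (it appears only later, in the planar result); ordinary jokers suffice throughout the $r=2$ construction. More importantly, your worry that the clause gadget is the hard part with only two colours is misplaced: in the paper the clause gadget is entirely monochromatic (all blue) and structurally very simple (six vertices, see their Figure~2). The 1-in-3 logic is enforced not by colour-cost penalties inside the clause gadget but by \emph{identifying} each clause vertex $s_{j,k}$ with a degree-two vertex $z^{\pm}_{i,k'}$ of the relevant variable gadget; that vertex already has two red (or two blue) edges in the variable gadget, so a zero-cost $2$-factor must take either both variable-gadget edges or both clause-gadget edges at $s_{j,k}$, and the degree constraint at $a_j$ forces exactly one of the three $s_{j,k}$ to use its clause edges. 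So the ``carrier colour vs.\ internal colour'' mechanism you sketch is not how the paper does it; the selection is purely structural, with the two colours living entirely inside the variable gadget.
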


\begin{proof}
We start by proving the claim for $r=2$, that is, we prove that \mrcf{2} is $\nph$ when $\Delta(G) \leq 4$, $d=q=2$, $k=\kmin$.
The proof is by reduction from the \OneThreeSAT problem which is known to be $\nph$ \cite{Schaefer78}.
An instance of \OneThreeSAT is an expression $\phi$ that consists of $n$ boolean variables $x_1, \ldots, x_n$ and $m$  clauses $c_1, \ldots, c_m$. 
Every clause $c_j$ consists of three literals $\ell_{j,1}, \ell_{j,2}, \ell_{j,3}$, each of which is either an occurrence of some variable $x_i$ or its negation $\bar{x_i}$. 
A clause is satisfied by a truth assignment if exactly one of its literals is satisfied by it, and $\phi$ is satisfied if all its clauses are satisfied.
\OneThreeSAT is the problem of determining whether a given expression $\phi$ has a satisfying assignment.

Given an instance $\phi$ as described above, we construct an instance $\inst$ of \mrcf{2}.
The set of colors is $X=\{red, blue\}$, thus $q=2$.
The traversal cost between two identical colors is zero and the traversal cost between red and blue is $1$, thus $d=2$.
Furthermore, $k=0$. 
We have $c_{min}=0$, which implies that $\kmin=0=k$.
To complete the reduction, we construct a graph $G$ with $\Delta(G)=4$ whose edges are colored with colors from $X$.

The graph $G$ is the union of $n$ vertex disjoint variable gadgets with vertex sets $X_1,\ldots,X_n$,
$m$ vertex disjoint clause gadgets with vertex sets $C_1, \ldots, C_m$, 
and $3m+4n$ vertex disjoint jokers, that is, four jokers $J_i^{(u)}$, $J_i^{(v)}$, $J_i^{(2)}$, $J_i^{(4)}$ for every variable $x_i$ and a joker $J_{i,k'}$ for every occurrence of it.
In our construction, every edge will join two vertices of the same gadget.
However, we will allow a vertex to appear in two (but not more) of the above gadgets.
Every gadget will have maximum degree 4, where vertices shared with another gadget have degree only 2 inside the gadget.
This will ensure that $\Delta(G)=4$. We now proceed with the description of the individual gadgets.

Let $x_i$ be a variable that has $p$ occurrences in $\phi$. 
Consult Figure \ref{narcotic} for the description of the corresponding variable gadget. 
We set $X_i = U_i \cup V_i \cup W_i \cup Z_i^+ \cup Z_i^-$ where
$U_i=\{u_{i,0}, u_{i,1}, \ldots, u_{i,p}\}$, $V_i=\{v_{i,0}, v_{i,1}, \ldots, v_{i,p}\}$, $W_i = \{w_{i,1}, w_{i,2}, w_{i,3}, w_{i,4} \}$
$Z_i^D=\{z_{i,1}^D, \ldots, z_{i,p}^D\}$ for $D \in \{+,-\}$. 
The edge set induced by $X_i$ is the disjoint union of $p$ red triangles, $p$ blue triangles, 
a blue cycle on the four vertices $u_{i,0}, w_{i,1}, w_{i,2}, w_{i,3}$ and a red cycle on the four vertices $v_{i,0}, w_{i,1}, w_{i,4}, w_{i,3}$ 
as depicted in Figure \ref{narcotic}. 
Note that every vertex of $Z_i^+$ (resp. $Z_i^-$) has degree two, and both its incident edges are red (resp. blue).

For every clause $c_j$, the gadget $C_i$ has six vertices $a_j, b_j, d_j, s_{j,1}, s_{j,2}, s_{j,3}$ connected with blue edges as depicted in Figure \ref{unguided}.

We identify each of the vertices $w_{i,2}$, $w_{i,4}$, $u_{i,k}$ and $v_{i,k}$ with a tips of the jokers $J_i^{(2)}$, $J_i^{(4)}$, $J_i^{(u)}$, and $J_i^{(v)}$, respectively.
Finally, for every literal $\ell_{j,k}$ that is the $k'$-th occurrence of some variable $x_i$ we identify the vertex $s_{j,k}$ with the vertex $z_{i,k'}^+$ (resp. $z_{i,k'}^-$) if the occurrence is positive (resp. negative) and we identify the vertex $z_{i,k'}^-$ (resp. $z_{i,k'}^+$) with a tip of the joker that corresponds to this literal.

\begin{figure}
\begin{center}
\commentfig{\includegraphics[width=15.8cm]{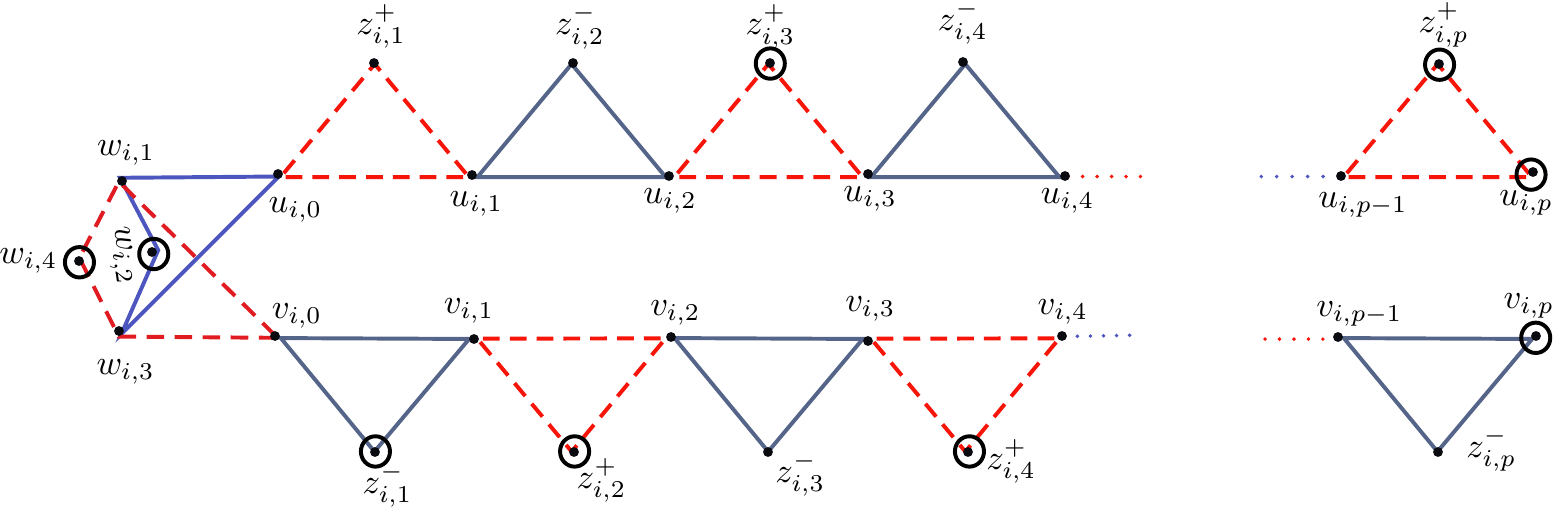}}
\caption{The variable gadget in Theorem \ref{consiste} corresponding to a variable $x_i$ having $p$ occurrences in $\phi$. 
The solid edges are blue, the dashed edges are red. 
Every vertex surrounded with a circle is the connecting tip of a joker. 
For every $k' \in [p]$ exactly one of $z_{i,k'}^+, z_{i,k'}^-$ is a connecting tip.
The gadget contains $p+4$ connecting tips.} \label{narcotic}
\end{center}
\end{figure}

To complete this part of the proof, it remains to show that $\phi$ has a satisfying assignment if and only if $G$ contains a $2$-factor $H$ with zero reload cost.
We suppose that $\phi$ has a satisfying assignment and provide a partial description of the 2-factor $H$ incrementally.
Recall that a partial description is valid if and only if every vertex has degree two, except possibly some connecting tips that are isolated.
Let $x_i$ be a variable assigned true by the assignment.
$H[X_i]$ consists of all the blue edges of $G[X_i]$.
Clearly, the traversal cost between these edges is zero.
It is also easy to see that all the vertices of $H[X_i]$ have degree two except the vertices $Z_i^+$, $w_{i,4}$ and one of $u_{i,p}, v_{i,p}$ (depending on the parity of $p$) all of which are isolated.
Note that $u_{i,p}, v_{i,p}$, $w_{i,4}$ and all the vertices of $Z_i^+$ corresponding to negative occurrences of $x_i$ are connecting tips of some joker.
We remain with the vertices of $Z_i^+$ corresponding to positive occurrences of $x_i$ that are isolated but should have degree two in order to make (the partial description of) $H$ valid.
Proceeding in the same way for the variables assigned false, we get a subgraph $H$ in which all the vertices of the variable gadgets which are not connecting tips have degree two, 
except the vertices of $Z_i^+$ corresponding to positive occurrences of variables assigned true and the vertices of $Z_i^-$ corresponding to negative occurrences of variables assigned false.
In other words, these are exactly the vertices of $Z_i^+ \cup Z_i^-$ that correspond to literals satisfied by the assignment.
By adding to $H$ the cycle $a_j d_j b_j s_{j,p}$ for every clause $c_j$ (where $\ell_{j,p}$ is its unique literal satisfied by the assignment),
we get a $2$-factor $H$ of zero cost.

\begin{figure}
\begin{center}
\commentfig{\includegraphics[scale=0.690]{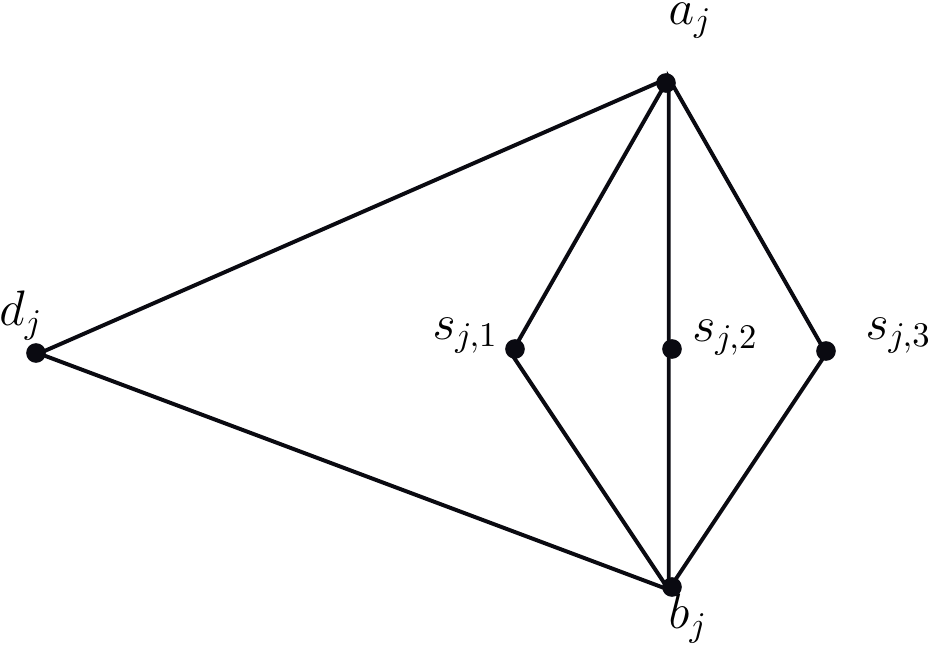}}
\caption{The clause gadget corresponding to the clause $c_j$.}\label{unguided}
\end{center}
\end{figure}

Conversely, suppose that $G$ contains a $2$-factor $H$ of zero reload cost.
Consider the vertices $U_i$ of the gadget $X_i$. 
Since every such vertex has two incident blue edges and two incident red edges, $H$ contains either the two blue edges or the two red edges.
We conclude that the vertices of $U_i - u_{i,0}$ are covered either by all the red triangles incident to them or all the blue ones.
By symmetry, the same holds for the vertices of $V_i - v_{i,0}$.
However, we note that the choice of the color (blue or red) must be identical in both $U_i$ and $V_i$ since otherwise a positive traversal cost is incurred at each of the vertices $w_{i,1}, w_{i,3}$.
Therefore, $H[X_i]$ is the subgraph of $G[X_i]$ that consists of all the blue (or all the red) edges.
We assign true to $x_i$ if $H[X_i]$ contains the blue edges and false otherwise.
It remains to show that this assignment satisfies $\phi$.
Consider a clause gadget $C_j$.
$H$ must contain the two edges $d_j a_j$ and $d_j b_j$.
Furthermore, since every vertex $s_{j,k}$ is identified with a $z$ vertex, $H$ contains either both edges of the clause gadget incident to $s_{j,k}$ or none of them.
Since $a_j$ has degree 2 in $H$, exactly one of these three vertices, say $s_{j,1}$, has both incident edges of the clause gadget in $H$.
Then, the two edges incident to $s_{j,1}$ in the corresponding variable gadget are not in $H$.
This implies that either $\ell_{j, 1}$ is a positive occurrence of a positive variable, or a negative occurrence of a negative variable.
In other words, $\ell_{j, 1}$ is satisfied.
The vertices $s_{j,2}$ and $s_{j,3}$ must have in $H$ their incident (red or blue) edges in their corresponding variable gadgets. 
Then, each one of them is either a positive occurrence of a negative variable, or a negative occurrence of a positive variable.
In other words, their corresponding literals are not satisfied.
We conclude that $C_j$ is satisfied, and thus $\phi$ is satisfied.

We now prove the claim for any $r \geq 3$. 
We need to show that \rmrcf~is $\nph$ even when $t=2$, $q=3$, $k=\kmin=0$ and $\Delta \leq r+2$.
Given an instance $\inst$ of \mrcf{2} with $\Delta \leq 4$, $t=q=2$ and $k=\kmin=0$,
we construct an instance $\instprime$ of \rmrcf~as follows.
By eventually adding a monochromatic cycle of size $3$, we may assume that $G$ has an even number of vertices.
The color set $X'$ is $X \cup \{ white\}$ where white is not a color of $X$. Thus, $q=3$. White edges are precisely the edges $E(G')\setminus E(G)$. The traversal cost $c'$ between white and every color is zero, and for any two colors of $X$, the traversal cost $c'$ is the same as $c$. 
Thus, $t=2$ and $c_{min}=0$, implying that $\kmin=0=k$.
Let $\mathcal{P}$ be a partition of the vertices $V(G)$ into pairs.
$G'$ is obtained from $G$ by adding to $G$ a $Q_r$ for every pair $\{u,v\} \in \mathcal{P}$ and identifying $r-2$ of its degree one vertices and $u$ altogether with each other as well as identifying the remaining $r-2$ vertices and $v$ with each other. Note that the pair $\{u,v\}$ is not necessarily an edge of the graph.
Every vertex of $V(G') \cap V(G)$ has $r-2$ white incident edges and at most 4 other incident edges since $\Delta(G) \le 4$, for a total of at most $r+2$ edges.
The degree of every vertex of $V(G') \sm V(G)$ is $r$. 
Therefore, $\Delta(G') \leq r+2$.
Let $H'$ be an $r$-factor of $G'$.
As the vertices of $V(G') \sm V(G)$ are of degree exactly $r$ in $G'$, $H'$ contains all the edges incident to the vertices of $V(G') \sm V(G)$, i.e. all the white edges.
By removing these edges from $H'$ we get a $2$-factor $H$ of $G$.
Moreover, $\rc(H)=\rc(H')$ since the white edges do not introduce a traversal cost.
We conclude that $\inst$ has a $2$-factor of zero cost if and only if $\instprime$ has an $r$-factor of zero cost. 
\end{proof}

The value of the parameter $t$ in Theorem \ref{consiste} is clearly tight.
When $t=1$, that is, when there is only one traversal cost, all the $r$-factors have the same reload cost.
In this case the problem reduces to determining the existence of an $r$-factor.
This problem is known to be polynomial-time solvable \cite{Pul73}.
The following theorem states that the parameter $\Delta(G)$ of Theorem \ref{consiste} is also tight.

\begin{theorem}\label{consists}
For every $r \in \mathbb{N}_{\geq 2}$, if $\Delta(G) = r+1$, then {\rmrcf} can be solved in polynomial time.
\end{theorem}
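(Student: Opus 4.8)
The plan is to reduce the problem to a maximum-weight perfect matching computation. First I would use the standing assumptions $\delta(G)\ge r$ and $\Delta(G)=r+1$ to note that every vertex of $G$ has degree exactly $r$ or $r+1$. Let $A=\{v\in V(G)\mid {\sf deg}_G(v)=r+1\}$. Given any $r$-factor $H$ of $G$, put $D=E(G)\sm E(H)$; then at each vertex $v$ the number of incident edges in $D$ is ${\sf deg}_G(v)-r$, which is $1$ if $v\in A$ and $0$ otherwise. Hence every edge of $D$ has both endpoints in $A$, so $D$ is a perfect matching of the induced subgraph $G[A]$. Conversely, for every perfect matching $D$ of $G[A]$ the graph $G-D$ is $r$-regular and spanning, hence an $r$-factor of $G$. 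Thus $H\mapsto E(G)\sm E(H)$ is a bijection between the $r$-factors of $G$ and the perfect matchings of $G[A]$; in particular, if $G[A]$ has no perfect matching (e.g. when $|A|$ is odd) then $G$ has no $r$-factor and the answer is \no.

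Next I would express $\rc(H)$ for $H=G-D$ as a function of $D$. For $v\in A$ and $f\in E_G(v)$, set $w_v(f)=\sum_{e\in E_G(v),\,e\ne f}\tc(e,f)$, the reload cost ``saved at $v$'' by deleting $f$ there. A vertex $v\notin A$ keeps all its incident edges, so $\rc(H,v)=\rc(G,v)$; a vertex $v\in A$ loses exactly its matched edge $e_v\in D$, and since $E_G(v)=E_H(v)\cup\{e_v\}$ disjointly, $\rc(G,v)=\rc(H,v)+\sum_{e\in E_H(v)}\tc(e,e_v)$, i.e. $\rc(H,v)=\rc(G,v)-w_v(e_v)$. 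Summing over all vertices and grouping the two endpoints of each matching edge, $\rc(H)=\rc(G)-\sum_{f=uv\in D}\big(w_u(f)+w_v(f)\big)=\rc(G)-\mu(D)$, where $\mu(f)\defined w_u(f)+w_v(f)$ for each edge $f=uv$ of $G[A]$ and $\mu(D)=\sum_{f\in D}\mu(f)$. Therefore minimizing $\rc(H)$ over $r$-factors $H$ is exactly maximizing $\mu(D)$ over perfect matchings $D$ of $G[A]$.

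Finally I would invoke the classical polynomial-time algorithm for maximum-weight perfect matching (Edmonds; see \cite{schrijver2003combinatorial}): compute $A$, form $G[A]$ with the nonnegative integer edge weights $\mu$, and either find a maximum-weight perfect matching $D^{\star}$ or conclude that none exists. Computing the degrees, the constant $\rc(G)$, all values $w_v(f)$ and $\mu(f)$ (there are $O(|E(G)|)$ of the latter, each a sum of at most $r$ terms), and running the matching algorithm on a graph with at most $|V(G)|$ vertices all take time polynomial in $|G|$. The instance $\inst$ is a \yes-instance if and only if $G$ has an $r$-factor and $\rc(G)-\mu(D^{\star})\le k$. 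The one point needing care is the bookkeeping that makes the objective decompose additively over $D$: checking that the cost change at each endpoint of a removed edge is the purely local quantity $w_v(f)$, and that a removed edge $f=uv$ contributes to exactly the two endpoints $u,v$ it matches. Once this is verified there is no substantive obstacle.
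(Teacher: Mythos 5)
Your proposal is correct and follows essentially the same route as the paper: complement an $r$-factor to a perfect matching of the subgraph induced by the degree-$(r+1)$ vertices, write $\rc(H)=\rc(G)-\sum_{f=uv\in D}\bigl(w_u(f)+w_v(f)\bigr)$ with exactly the paper's edge weights, and solve a maximum-weight perfect matching problem in polynomial time. Your write-up merely adds the explicit bijection argument and the no-perfect-matching case, which the paper leaves implicit.
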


\begin{proof}
Recall that we assume $\delta(G) \geq r$. 
Let $R$ be the vertices of degree $r$ of $G$ and $R^+ = V(G) \sm R$ the vertices of degree $r+1$.
Let also $H$ be an $r$-factor of $G$.
We observe that $H$ is obtained by removing a perfect matching $M$ of $G[R^+]$ from $G$.
Every edge $e \in M$ reduces the reload cost by the sum of the traversal costs with all its incident edges.
Therefore, $\rc(H) = \rc(G) - \sum_{e \in M} w(e)$ where

\[
w(e) \defined \sum_{e' \in E_G(u) - e} c(e,e') + \sum_{e' \in E_G(v) - e} c(e,e')
\]

\noindent for every edge $e=uv$ of $G$.
Then, minimizing $\rc(H)$ boils down to the problem of finding a maximum weight perfect matching $M$ of $G[R^+]$ with the edge weight function $w$, which can clearly be solved in polynomial time.
\end{proof}

Both the result in \cite{GGM14} and Theorem \ref{consiste} are for general graphs and hence leave the complexity of the problem open for special graph classes. In the following (in Theorem \ref{adorning}), we prove hardness of {\rmrcf} in planar graphs even under restricted cases.
We need the following Lemma.

\begin{lemma}\label{appetite}
Let $G$ be a planar graph with an even number of vertices and $\delta(G) \geq 2$.
There is a partition  $M$ of $V(G)$ into pairs such that the multigraph $G'$ obtained by adding to $G$ an edge between every pair of $M$ is planar.
Moreover, such a partition can be found in polynomial time.
\end{lemma}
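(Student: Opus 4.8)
The plan is to reduce to the case that $G$ is connected, and then to add each pairing edge as an arc drawn inside a face of a fixed planar embedding of $G$, arranged so that the new arcs pairwise do not cross and do not cross any edge of $G$. For the reduction, if $G$ is disconnected we group its components of odd order into pairs (their number is even, since $|V(G)|$ is even) and, drawing the components side by side, add for each such pair a single edge joining a boundary vertex of one component to a boundary vertex of the other; this yields a connected planar graph $\tilde G$ with $V(\tilde G)=V(G)$ and an even number of vertices, and any valid pairing of $\tilde G$ is a valid pairing of $G$. So assume $G$ is connected and fix a planar embedding.

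Call an assignment $\phi$ of an angular corner at each vertex (equivalently, of a face incident to that vertex, taken with multiplicity along boundary walks) \emph{balanced} if every face receives an even number of assigned corners. From a balanced $\phi$ I would build the pairing as follows: in each face $f$, list the corners assigned to $f$ in the cyclic order in which they appear on the boundary walk of $f$ and pair them off consecutively (first with second, third with fourth, etc.). Since every vertex owns exactly one assigned corner and distinct corners sit at distinct vertices, the set $M$ of all pairs so formed, over all faces, is a partition of $V(G)$ into pairs. For a pair formed inside $f$, route its edge as an arc inside the open disk $f$ between the two corners; consecutive pairs inside one face correspond to nested or disjoint sub-arcs of the boundary walk, so these arcs can be taken pairwise disjoint, and each arc meets $G$ only at its two endpoints because the edges of $G$ incident to $f$ lie on $\partial f$. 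Arcs drawn in distinct faces lie in disjoint regions, so $G'=G+M$ is planar.

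It remains to produce a balanced assignment, which is the heart of the matter. A balanced assignment exists if and only if the auxiliary graph $H$ on vertex set $V(G)$, with $u$ adjacent to $v$ whenever some face of $G$ is incident to both, has a perfect matching: from a perfect matching of $H$, choosing for each pair a common face and a corner of each endpoint on it gives a balanced $\phi$ (each face then receives twice the number of pairs routed through it), and conversely the pairing $M$ built above is a perfect matching of $H$. Since $H$ contains the connected graph $G$ and $|V(H)|$ is even, by Tutte's theorem it suffices to show that $H-U$ has at most $|U|$ odd components for every $U\subseteq V(G)$. The case $U=\emptyset$ is immediate. For $U\neq\emptyset$, two components of $G-U$ lie in the same component of $H-U$ as soon as some face of $G$ is incident to both; contracting each class of components of $G-U$ under this ``shares a face'' relation to a single vertex yields a connected plane graph $\hat G$ in which the contracted vertices form an independent set $I$ with $|V(\hat G)\setminus I|=|U|$ and no face of $\hat G$ incident to two vertices of $I$. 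A short planarity argument then gives $|I|\le|V(\hat G)\setminus I|=|U|$, so $H-U$ has at most $|U|$ components, which finishes the verification.

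The step I expect to be the main obstacle is exactly this last point — bounding the number of components of $H-U$ by $|U|$. The geometric reason is that around any vertex $u\in U$ the components of $G-U$ adjacent to $u$ occupy cyclically consecutive angular sectors, so consecutive ones share a face of $G$; this is what prevents $U$ from shattering $H$ into more than $|U|$ pieces, and turning it into a clean count (via Euler's formula applied to $\hat G$) is where the care lies. Everything else is routine: the reduction to connected graphs, and the fact that a balanced corner assignment yields a planar augmentation by consecutive pairing inside faces. Finally, for the complexity claim, a planar embedding, the graph $H$, and a perfect matching of $H$ (by the blossom algorithm) are all computable in polynomial time, and extracting the balanced assignment and then $M$ from such a matching is immediate, so $M$ is produced in polynomial time.
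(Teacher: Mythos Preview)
Your approach is quite different from the paper's sweep-line argument, but the step you single out as the main obstacle --- bounding the number of components of $H-U$ by $|U|$ --- is not merely difficult: it is false. Take $G$ to be the Kleetope of the octahedron (place a new vertex inside each of the eight triangular faces of the octahedron and join it to the three vertices of that face). This $G$ is a simple planar triangulation on $14$ vertices with $\delta(G)=3$. Every face of $G$ is a triangle, so two vertices share a face if and only if they are adjacent; hence $H=G$. With $U$ equal to the six original octahedron vertices, $G-U$ consists of the eight new vertices with no edges among them, so $H-U$ has $8>6=|U|$ odd components. Tutte's condition fails and $H$ has no perfect matching, so your route through $H$ is blocked. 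Your heuristic that ``consecutive sectors around $u\in U$ share a face'' does not help here, because the eight components are isolated vertices whose \emph{entire} neighbourhoods lie in $U$.

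Worse, the same example shows that the lemma as stated cannot hold. Since $G$ is a $3$-connected triangulation, its planar embedding is essentially unique, and an edge can be added planarly between $u$ and $v$ only if they lie on a common (triangular) face, i.e., only if $uv\in E(G)$; adding earlier pairing edges only subdivides faces and creates no new co-facial pairs. Hence any planarity-preserving pairing $M$ would have to be a perfect matching of $G$, which does not exist. The paper's sweep-line proof therefore also fails on this instance: every inner face has exactly one new vertex and two old ones, so no face can pair two new vertices, yet eight new vertices must be paired using only six old ones. The lemma needs an additional hypothesis (and the specific gadget graphs arising in the application may well satisfy one) before either argument can go through.
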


%
%
\begin{proof}
We construct $M$ incrementally, starting from $M = \es$.
At any given point of the execution of the algorithm $M$ is a set of vertex-disjoint pairs of vertices.
We add pairs to $M$ until $\cup M = V(G)$, implying that $M$ is a partition of $V(G)$.   
A vertex $v$ is \emph{paired} (resp. \emph{unpaired}) if $v \in \cup M$ (resp. $v \notin \cup M$). 
We use a sweep line algorithm \footnote{A well known technique in computational geometry \cite{Berg2008computationalgeometry}.} that scans the embedding from left to right. 
Whenever the sweep line encounters a new inner face $F$, it pairs all the unpaired vertices of $F$, except possibly one vertex in case the number of unpaired vertices is odd. The pairing is done such that the pairs are non-crossing,
i.e. such that one can add edges to $G$ between the pairs while preserving planarity.
Such a pairing can be easily obtained as follows.
Let $v_1, v_2, \ldots, v_{2 \ell}$ be the vertices to be paired, in the clockwise order they appear on the face $F$.
The pairing, $\{ \set{v_1, v_2}, \ldots, \set{v_{2 \ell - 1}, v_{2 \ell}} \}$ is a non-crossing pairing.
If the number of unpaired vertices of $F$ is odd, the rightmost unpaired vertex of $F$ is left unpaired.

When the sweeping phase is terminated, all the unpaired vertices are on the outer face, and their number is even.
The algorithm terminates after processing the outer face exactly in the same way as an inner face was processed.
\end{proof}

\begin{theorem} \label{adorning}
For every  $r \in [2,5]$, {\rmrcf} is {\nph} even when the input graph $G$ is planar, 
$\Delta(G) \leq r+4$, $d=2$, $k=\kmin$, and $q=\left \{ \begin{array}{ll} 6 & \textrm{if~} r = 2\\ 7 & \textrm{otherwise}. \end{array} \right.$
\end{theorem}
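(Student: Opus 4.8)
The plan is to prove the case $r=2$ by turning the (inherently non-planar) reduction of Theorem~\ref{consiste} into a planar one with the help of a crossover gadget, and then to lift the result to $r\in\{3,4,5\}$ by a planarity-preserving version of the ``$Q_r$'' construction used in the second half of that proof.

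For $r=2$, recall that the reduction of Theorem~\ref{consiste} from \OneThreeSAT produces a graph $G$ with $\Delta(G)\le4$, $d=q=2$ and $k=\kmin=0$, and that the only reason $G$ need not be planar is that, once drawn in the plane, its edges may cross. I would fix an arbitrary drawing of $G$ and replace each crossing of two edges $e_1,e_2$ by a bounded-size planar \emph{crossover gadget} whose four terminals are the endpoints (after subdivision) of $e_1$ and $e_2$. The gadget must (i) in every reload-cost-zero $2$-factor be traversed so that ``$e_1$ present'' and ``$e_2$ present'' are realised independently and consistently with the way the rest of $G$ uses these two edges, including matching their colours at the terminals; (ii) carry zero reload cost, which forces it to be traversed by monochromatic cycles and to contain joker-like pieces that absorb its vertices when a wire is ``absent''; and (iii) keep the maximum degree at most $6$. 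Four fresh colours (so $q=6$ in total) suffice to route the two signals through the gadget --- the crossing edges may a priori be red, blue or a mixture --- while keeping every internal vertex incident, among the edges of any $2$-factor, only to edges of a single colour, exactly the device already used for jokers and for the monochromatic triangles and cycles inside the variable gadget. The numbers $d=2$ and $k=\kmin=0$ are untouched, and a finite case analysis of how a $2$-factor can meet the gadget shows that the equivalence ``$\phi$ satisfiable $\Leftrightarrow$ $G$ has a zero-cost $2$-factor'' of Theorem~\ref{consiste} is preserved; this settles $r=2$.

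For $r\in\{3,4,5\}$ I would start from the planar instance $\inst$ just constructed, which (as in Theorem~\ref{consiste}) we may assume has an even number of vertices and $\delta(G)\ge2$, and mimic the second half of the proof of Theorem~\ref{consiste}, using Lemma~\ref{appetite} to stay planar. By Lemma~\ref{appetite}, pick a partition $M$ of $V(G)$ into pairs such that adding one edge between each pair preserves planarity; then, for each pair $\{u,v\}\in M$, rather than adding the edge $uv$, add a copy of $Q_r$ drawn inside a thin tube along that virtual edge, identifying $r-2$ of its degree-one vertices together with $u$ and the remaining $r-2$ together with $v$. This is planar precisely because $Q_r$ is planar for every $r\le 5$: $Q_3$ trivially, $Q_4$ is $K_5$ with two edges deleted together with four pendant edges, and $Q_5$ may be taken to be the octahedron $K_{2,2,2}$ (that is, $K_6$ minus a perfect matching) with six pendant edges; for $r\ge6$ no choice of the subdivided edges works (the hub part already has more than $3\cdot 7-6$ edges), which is the origin of the restriction $r\le5$. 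Colouring all $Q_r$-edges with one new colour \emph{white}, of zero traversal cost against every colour, yields $q=7$, $d=2$ and $k=\kmin=0$; every vertex of $V(G)$ gains exactly $r-2$ incident edges so $\Delta(G')\le 6+(r-2)=r+4$, while every new vertex has degree exactly $r$. As in Theorem~\ref{consiste}, an $r$-factor of $G'$ is forced to contain all white edges, and deleting them produces a $2$-factor of $G$ of the same reload cost; hence $G'$ has a zero-cost $r$-factor iff $G$ does iff $\phi$ is satisfiable.

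The step I expect to be the real obstacle is the crossover gadget of the $r=2$ stage: one must exhibit a concrete planar four-terminal gadget that genuinely decouples the two crossing edges in every zero-cost $2$-factor, never incurs reload cost, and simultaneously respects the degree bound $6$ and the palette bound $6$. Everything else --- the planar-embedding bookkeeping, the $Q_r$ lift, and the correctness of the reduction --- is a routine adaptation of arguments already present in the proofs of Theorem~\ref{consiste} and Lemma~\ref{appetite}.
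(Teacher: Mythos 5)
Your overall plan coincides with the paper's: take the $\Delta(G)\le 4$, $d=q=2$, $k=\kmin=0$ instance from Theorem~\ref{consiste}, planarize it by replacing each edge crossing with a constant-size gadget using four extra colours (so $q=6$, $\Delta\le 6$), and then lift to $r\in[3,5]$ via Lemma~\ref{appetite} and planarity-preserving $Q_r$-like gadgets coloured with one new zero-cost colour (so $q=7$, $\Delta(G')\le r+4$). However, there is a genuine gap exactly where you flag it: you never exhibit the crossover gadget, and that gadget together with its correctness analysis \emph{is} the substance of the $r=2$ case. The paper constructs a concrete gadget (Figure~\ref{skeleton}, with nine attached jokers, the ``green cycles'', ``blue/red triangles'', ``middle triangle'' and ``yellow rectangle'', and a cost function in which yellow is free against the original colours $1,2$) and then carries out a nontrivial case analysis showing that in any zero-cost $2$-factor either both parts or neither part of each crossing edge is used, and that every combination (both edges absent, both present, exactly one present) can be completed inside the gadget at zero cost. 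Listing the desired properties of such a gadget is not a proof that one exists; without it the $r=2$ statement (and hence everything built on it) is unproven.

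A second, smaller but real, error is your treatment of $r=5$. You propose taking $Q_5$ with the three removed edges forming a perfect matching of $K_6$, i.e.\ the octahedron with six pendant edges. This cannot be routed along the virtual edge $uv$ supplied by Lemma~\ref{appetite}: any planar $K_6$ minus three edges has $12=3\cdot 6-6$ edges, hence is a triangulation whose faces are triangles, so at most three of the six attachment vertices lie on a common face; after identifying three pendants with $u$ and three with $v$ one again obtains (if planar at all) a triangulation in which the non-adjacent vertices $u$ and $v$ share no face, so there is no embedding of the gadget in a disk with both $u$ and $v$ on its boundary, which is what gluing it into the planar drawing of $G$ requires. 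This is precisely why the paper abandons $Q_5$ for $r=5$ and uses a bespoke planar gadget (Figure~\ref{friesian}) with $2(r-2)$ degree-one vertices all accessible from the outer face. Your $r=3$ and $r=4$ cases, and the reuse of the ``all white/new-colour edges are forced into any $r$-factor'' argument from Theorem~\ref{consiste}, do match the paper.
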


\begin{proof}
We start by proving the theorem for $r=2$
by reducing an instance $\inst$ of {\mrcf{2}} with $\Delta(G) \leq 4$, $d=2$, $k=\kmin$, $q=2$  to 
an instance $\instprime$ of {\mrcf{2}}, with $\Delta(G') \leq 6$, $d=2$, $k=\kmin$, $q=6$, and $G'$ is planar.

We rename the colors of the original graph as $1$ and $2$, and we add four colors $X_C= \{red,$$ blue,$ $ green, yellow \}$.
Therefore, $q=6$. 
As for the cost function we use the following function $c'$ that is an extension of $c$ and uses only costs from $\{0, 1\}$, i.e. $d=2$.
\[
c'(\lambda,\lambda') = \left\{ \begin{array}{ll}
0 & \textrm{if}~\lambda=\lambda'\\
0 & \textrm{if}~yellow \in \{\lambda,\lambda'\}~\textrm{and}~\{1,2\} \cap \{ \lambda,\lambda' \} \neq \emptyset\\
0 & \textrm{if}~1 \in \{\lambda,\lambda'\}~\textrm{and}~2 \notin \{ \lambda,\lambda' \}\\
1 & \textrm{otherwise.}
\end{array} \right.
\]
We consider a planar embedding of $G$ where all crossings are polynomial on $|V(G)|$ and no three edges cross the same point. 
$G'$ is obtained by replacing every crossing point of two edges $e,e'$ of $G$ by a copy of the gadget depicted in Figure~\ref{skeleton}, 
and 9 jokers such that one tip of each joker is identified with a distinct vertex of the gadget.
We observe that $\Delta(G') \leq 6$, as required.
For a gadget under consideration,
the set of vertices that contain the index $L$ (resp. $R$) are its \emph{left} (resp. \emph{right}) vertices,
and the remaining vertices are its \emph{middle} vertices.
The \emph{left part} (resp. \emph{right part}) of the gadget is the subgraph induced by the left (resp. right) and middle vertices together.
We refer to a four cycle $m_P a_{P,T} t_P a_{P,B}$ for $P \in \set{L,R}$ as a \emph{green cycle},
to the triangle $t_T t_L t_R$ as the \emph{middle triangle},
to a triangle that consists of an $m$-vertex, a $t$-vertex and an $a$-vertex as a \emph{blue} or \emph{red} triangle,
and to the 8-cycle induced by the four $c$-vertices together with the four $m$-vertices as the \emph{yellow rectangle}.
Note that all these cycles have zero reload cost, though (despite their names) they are not monochromatic.

The two parts of the edge $e$ (resp. $e'$) inherit their color from $e$ (resp. $e'$).
Clearly, $V(G) \subseteq V(G')$.
Finally, $k'=0$.

\begin{figure}[h]
\begin{center}
\scalebox{.96}{\commentfig{\includegraphics{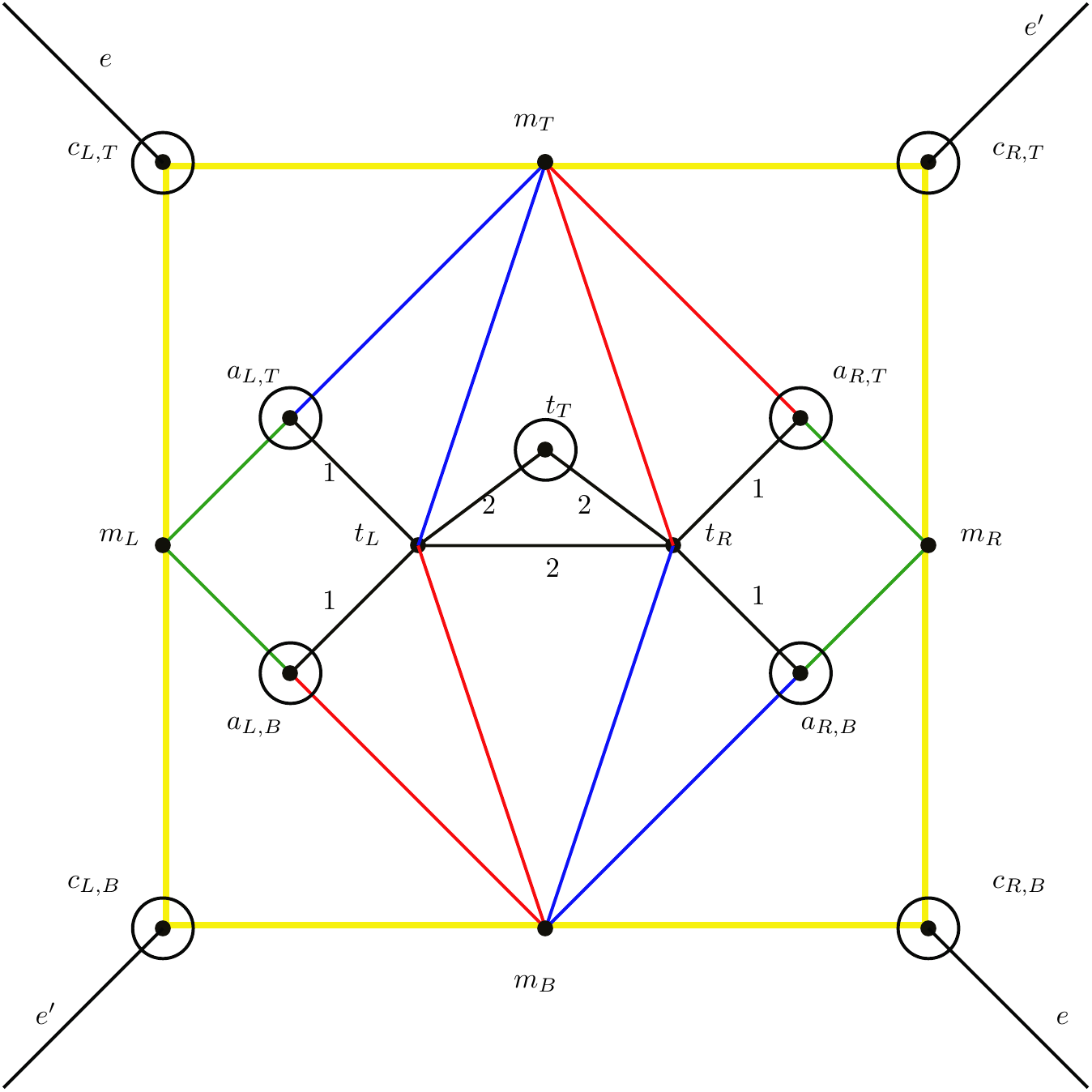}}}
\end{center}
\caption{The planar gadget that replaces two crossing edges of $G$. The vertices marked with a circle are identified with a tip of a joker.} \label{skeleton}
\end{figure}

It remains to show that $G$ has a $2$-factor with zero reload cost if and only if $G'$ has one.
Suppose that $G$ has a $2$-factor $F$ with zero cost.
We construct a $2$-factor $F'$ of $G'$ as follows.
 
A part of an original edge is in $F'$ if and only if the original edge is in $F$.
Since $F$ is a $2-$factor of zero cost, all the original vertices have degree 2 in $F'$ and they incur a zero reload cost. 
Recall that we allow for the vertex of a gadget that is the connecting tip of a joker to be isolated in our description of $F'$ since such a vertex will be in a cycle of the joker in $F'$.
For every gadget that corresponds to a crossing point of two edges $e$ and $e'$, we proceed as follows
to ensure that $F'$ incurs a zero cost in every gadget.
\begin{itemize}
\item {$\{e,e'\} \cap F = \emptyset$:} 
In this case $F'$ contains the yellow rectangle and the middle triangle. Then $F'$ incurs a zero cost in this gadget.

\item {$\{e,e'\} \subseteq F$:} 
In this case $F'$ contains the two green cycles
and the two paths $c_{L,H} m_H c_{R,H}$ for $H \in \{ T, B \}$.
Combining with the two parts of $e$ and $e'$ incident to this gadget, the degree of every vertex of this gadget in $F'$ is $2$.
We recall that $c'(1,yellow)=c'(2,yellow)=0$ and conclude that $F'$ incurs a zero cost in this gadget.

\item {$|\{e,e'\} \cap F| = 1$:} 
Assume without loss of generality that $e \in F$ and $e' \notin F$. 
In this case $F'$ contains the left blue triangle, the right green cycle,
and the two adjacent edges of the yellow rectangle, 
namely the path $c_{L,T} m_L c_{L,B} m_B c_{R,B}$. 
Combining with the two parts of $e$ incident to this gadget, the degree of every vertex of this gadget in $F'$ is $2$.
We note that $c'(\chi(e),yellow)=0$ and conclude that $F'$ incurs a zero cost in this gadget.
\end{itemize}

Conversely, suppose that $G'$ has a $2$-factor $F'$ with zero cost.
Recall that for any two distinct colors $\lambda,\lambda' \in \{red, blue, green, yellow\}$ we have $c'(\lambda,\lambda'')=1$.
Therefore, for every $m$-vertex the $2$-factor $F'$ contains two edges of the same color.
If $F'$ contains two adjacent blue (resp. red) edges, it must contain the third edge colored $1$ that completes the blue (resp. red) triangle because otherwise an $a$-vertex incurs a positive cost, since $c'(green,red)=c'(green,blue)=1$.
If $F'$ contains two adjacent green edges, then for the same reason, $F'$ contains the corresponding green cycle.

Consider one of the parts, say the left part of a gadget. 
Since $t_L$ is common to the blue, red and green cycles of this part,
at most one of these cycles are in $F'$.
We consider two cases:
\begin{itemize}
\item {$F'$ contains none of the cycles on the left part:} 
Then the edges $t_L t_R$ and $t_L t_T$ are in $F'$,
implying that none of the cycles on the right part are in $F'$.
Since all the $m$ vertices have degree 2 in $F'$ and none is covered by the inner cycles, $F'$ must contain the yellow rectangle.
Then none of the parts of $e$ and $e'$ are in $F'$.

\item {$F'$ contains one cycle on the left part:}
In this case, since the degree of $t_{R}$ must be $2$ in $F'$, $F'$ contains one cycle of the right part. Every cycle saturates one of the $m$ vertices, for a total of two $m$ vertices saturated.
Then, the remaining two $m$ vertices are saturated by four yellow edges that together form two edges of the yellow rectangle.
We consider two sub-cases
\begin{itemize}
\item{$F'$ contains two non-adjacent edges of the yellow rectangle.}
In this case all the four yellow edges have degree 1.
Therefore, both parts of $e$ and both parts of $e'$ are in $F'$.

\item{$F'$ contains two adjacent edges of the yellow rectangle.}
In this case two opposite corners of the yellow cycle, say $c_{L,T}$ and $c_{R,B}$ have one incident yellow edge, while the other two corners have either two or zero incident yellow edges.
Then both parts of $e$ (in the symmetric case, of $e'$) are in $F'$,
and none of the parts of the other edge is in $F'$.
\end{itemize}
\end{itemize}
We conclude that for every edge $e$ of $G$, either all its parts or none are in $F'$. 
Then $F'$ implies a subgraph $F$ of $G$.
Since all the original vertices have degree 2 in $F'$, they have degree 2 in $F$, i.e. $F$ is a 2-factor of $G$.
Furthermore, the cost at every original vertex is zero since $c'$ extends $c$.

Now for any $r \in [3,5]$, we have to reduce an instance $\inst$ of \mrcf{2} where $\Delta(G) \leq 6$, $d=2$, $k=\kmin$, $q=6$, to an instance $\instprime$ of {\rmrcf}, where $\Delta(G') \leq r+4$, $d=2$, $k=\kmin$, $q=7$ and $G'$ is planar. 
Note that, since whenever $G$ is planar, we have $\delta(G) \leq 5$, it does not make sense to consider other values of $r$.

If $G$ is odd, we subdivide an edge and add a $5$-joker whose tip is identified with the newly created vertex.
We color all the edges with the color of the original edge.
This creates an equivalent instance having all the properties assumed for $G$.
By Lemma \ref{appetite}, there is a pairing $M$ of the vertices of $G$ such that edges can be added between each pair of vertices while still preserving planarity.
For every pair $\set{u,v} \in M$ we will add a gadget that will have the desired properties of $Q_r$, and in addition will preserve planarity.
Namely, the gadget will have $2(r-2)$ vertices of degree one, all the remaining vertices of degree $r$, and will have a planar embedding in which all the degree one vertices are in the outer face.
Half (i.e., $(r-2)$) of the degree one vertices of the gadget are identified with $u$ and the remaining half are identified with $v$.
Now every new vertex has degree $r$ and the degree of every original vertex increased by $r-2$.
The new edges are colored with a new color $\lambda'$ such that for every color $\lambda$ we have $c'(\lambda',\lambda)=0$.
$\instprime$ has an $r$-factor $F'$ such that $\rc(F')=0$ if and only if $\inst$ has a $2$-factor $F$ with $\rc(F)=0$.
We omit the proof of this fact since it is identical to the non-planar case. 

We now describe the gadget depending on the value of $r$.
\begin{itemize}
\item {$r=3$.} 
In this case the gadget is a $Q_3$ which has the desired planarity properties.

\item {$r=4$.} 
In this case the gadget is a $Q_4$. 
However, in order to get the planarity properties, the subdivided edges of the $K_5$ should constitute a matching.

\item {$r=5$.}
In this case we use the gadget depicted in Figure \ref{friesian}.
\end{itemize}

\begin{figure}
\begin{center}
\scalebox{.6}{\commentfig{\includegraphics{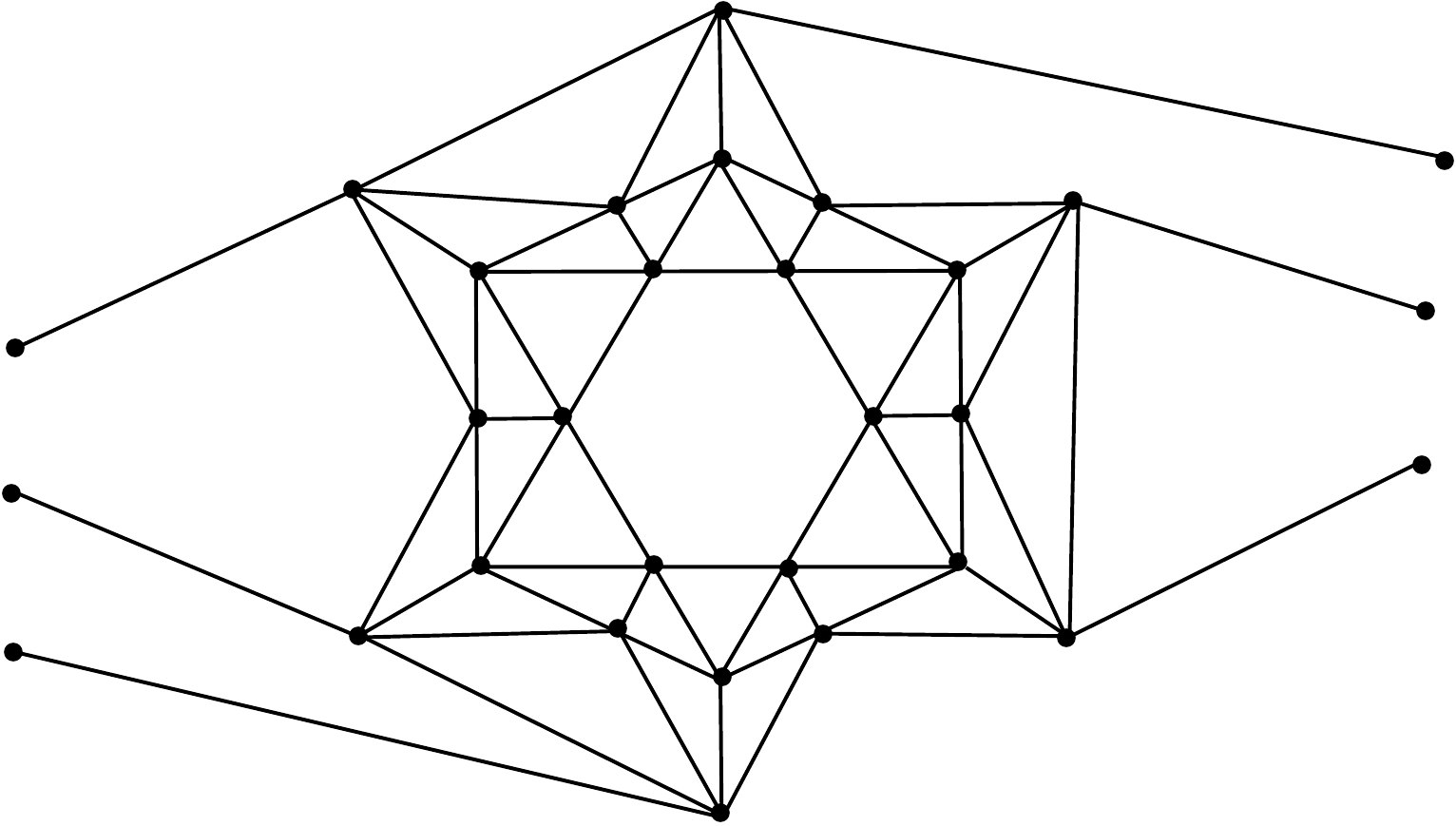}}}
\end{center}
\caption{The planar gadget used for the case $r=5$.}\label{friesian}
\end{figure} 

\end{proof}


\section{Parameterized Complexity of $r$-MRCF}
\begin{lemma}\label{granting}
For every $\alpha>1$,  \mrcf{2}, parameterized by $\pw(G)$, is \mbox{\rm \wonehard} even when $d=2$, $k=\kmin$, and $\avgdeg <  6
\cdot\alpha$.
\end{lemma}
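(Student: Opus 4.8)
The plan is to reduce from \mcc, following the template already used throughout this paper, but this time taking care that the host graph of the reduction has bounded pathwidth and average degree below $6\alpha$. Given an instance $(G,\chi,k)$ of \mcc\ with color classes $V_1,\dots,V_k$, each an independent set of the same size $N$, I would build an edge-colored graph $G'$ together with a two-color cost function $c'$ (so $d=2$) using only costs in $\{0,1\}$, and set the target to $\kmin=0$. The backbone of $G'$ is a long path-like structure consisting of $\binom{k}{2}$ consecutive \emph{edge-choice blocks}, one for each pair $\{i,j\}$ of colors, strung together so that a valid zero-cost $2$-factor is forced to pick, in the $\{i,j\}$-block, a single edge of $G$ between $V_i$ and $V_j$. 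To guarantee consistency — that the endpoint picked in $V_i$ is the same across all $k-1$ blocks touching color $i$ — I would route, for each color $i$, a "selector gadget" that carries the identity of the chosen vertex of $V_i$ through the sequence of blocks; this is the standard vertex-selection mechanism, but here laid out linearly so that at any sweep position only $O(k)$ selector threads are "active", keeping the pathwidth bounded by a function of $k$ alone. Jokers (and $5$-jokers, as defined above) absorb parity, supply the degree-$2$ slack needed at branch points, and never contribute to the cost because they are monochromatic; diamonds/$Q_\ell$-type gadgets supply local choices. The parameter $k'$ of the output instance will be $0$ and the \fpt\ parameter $\pw(G')$ will be bounded by $g(k)$ for a computable $g$, which is exactly what an \fpt\ reduction from \mcc\ requires.

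The concrete steps, in order: (i) describe the edge-choice block for a pair $\{i,j\}$, whose zero-cost $2$-factors are in bijection with the edges of $G$ between $V_i$ and $V_j$ (color the two "rails" entering the block with colors encoding membership in $V_i$ and $V_j$, and use triangle/diamond choices colored so that a nonzero traversal cost is incurred unless the block commits to exactly one such edge); (ii) chain the $\binom{k}{2}$ blocks in a fixed linear order of the pairs, and for each color $i$ add a selector path that enters the arrangement before the first block touching $i$, visits each such block, and exits after the last one, colored so that a zero-cost $2$-factor must carry a constant "value" in $[N]$ along it and must agree with the vertex committed inside each block it visits; (iii) verify the degree bound — every vertex sits in $O(1)$ gadgets, each gadget has maximum degree $O(1)$, and by padding with jokers along low-degree stretches one can push $\avgdeg$ strictly below $6\alpha$ for any fixed $\alpha>1$ (jokers have average degree $5/2<6$, and inflating their number dilutes the average as much as desired); (iv) check $d=2$, all costs in $\{0,1\}$, $\kmin=0$, so $k=\kmin$ holds; (v) prove the two directions of correctness: a multicolored clique yields the obvious zero-cost $2$-factor, and conversely a zero-cost $2$-factor forces each block to commit to a genuine edge of $G$ and the selectors force those $\binom{k}{2}$ edges to span a common set of $k$ vertices, one per color, hence a clique; (vi) bound $\pw(G')$ by giving an explicit path decomposition that sweeps block by block, whose bags contain the current block plus the $O(k)$ selector threads currently active, so the width is $O(k\cdot N?)$ — and here one must be careful, because the selector value ranges over $[N]$; the fix is to encode each selector value in unary along its own thread so that "carrying the value" is a purely local degree-$2$ constraint and the bag needs only $O(k)$ vertices at a time, independent of $N$.

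The main obstacle is exactly that last point: making the per-color selector gadget transmit one of $N$ possible values through $k-1$ checkpoints while keeping the pathwidth a function of $k$ only (not of $N$), and simultaneously keeping average degree below $6\alpha$. The natural "bundle of $N$ parallel tracks, switch onto one of them" design blows up the pathwidth to $\Theta(N)$; instead I would have the selector be a single path of length $\Theta(N)$ per block-visit whose $2$-factor behavior at the block encodes a "pointer offset", with the block's triangle choices colored to read off that offset — a mechanism analogous to the coordinate gadgets in classic \wone-hardness proofs for pathwidth. Getting the colors right so that \emph{consistency across blocks} is enforced with only the two available nonzero-forbidding colors, rather than $\Theta(N)$ colors, is the delicate part; everything else (the degree count, the joker padding to reach $\avgdeg<6\alpha$, the leaf parity arguments) is routine and mirrors the proofs of Theorems~\ref{consiste} and~\ref{adorning}.
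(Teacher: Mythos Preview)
Your proposal has a genuine gap rooted in a misreading of the hypotheses. The lemma constrains $d$, the number of \emph{distinct cost values}, to be $2$; it says nothing about $q$, the number of \emph{colors}. You have set yourself the much harder task of building the reduction with only $O(1)$ colors, and this is precisely the ``main obstacle'' you identify at the end --- transmitting one of $N$ possible vertex identities through $k-1$ checkpoints with a constant-size palette while keeping pathwidth a function of $k$ alone. You do not actually solve this obstacle; the pointer-offset idea is left at the level of an analogy.

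The paper simply does not face this difficulty. It takes the color set to be $V(H)\cup\{white\}$, i.e.\ one color per vertex of the source graph $H$ plus one extra, and encodes adjacency in $H$ directly into the cost function: $c(v,v')=0$ iff $v=v'$ or $vv'\in E(H)$, and $c(v,white)=1$, $c(white,white)=0$. All costs are in $\{0,1\}$, so $d=2$ as required, but $q=|V(H)|+1$ is unbounded. With this palette, ``carrying the identity of the chosen vertex of $V_i$'' is trivial: the edges incident to the $\ell$-th copy of $V_i$ are literally colored $v_{i,\ell}$. The construction then needs only a set $S$ of $\binom{k}{2}$ hub vertices (one per unordered pair of colors, arranged along an Eulerian circuit of $K_k$), copies $U$ of the color classes, and connector vertices $T$; removing $S$ leaves components of pathwidth $3$, giving $\pw(G)\le\binom{k}{2}+3$. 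The average-degree bound below $6\alpha$ is obtained by first padding each color class of $H$ to size at least $\alpha/(\alpha-1)$, not by adding jokers afterwards. Once you allow $q$ to grow with $|V(H)|$, all of your selector machinery, unary encodings, and block-chaining become unnecessary.
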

\begin{proof}
The proof is by an {\fpt} reduction from {\mcc}.
We first note that {\mcc} is {\wonehard} even when the number of colors $k$ is restricted to be odd.
Indeed, the problem can be reduced to its special case as follows.
Let $(H,c,k)$ be an instance of {\mcc} where $k$ is even.
Let $(H',c',k+1)$ be an instance of {\mcc}, where $(H',c')$ is obtained by adding a universal vertex $v$ colored $k+1$ to $H$.
It is easy to see that $H$ contains a clique on $k$ vertices with one vertex from each color class if and only if $H'$ contains one. Also, by adding dummy vertices, we can also assume that each chromatic class in the input graph of {\mcc} has at least $t$ vertices, where $t=\frac{\alpha}{\alpha-1}$.

Given an instance $(H,c,k)$ of {\mcc} with $k$ odd, we construct an instance $\inst$ of \mrcf{2}.
Let $V(H) = V_1 \uplus V_2 \uplus \cdots \uplus V_k$, where $V_i$ is the set of vertices of $V(G)$ that are colored $i$, for $i \in [k]$. Notice that $|V(G)\geq t\cdot k$.
Let $W$ be an Eulerian circuit of the complete graph $K_k$, which exists since $k$ is odd. We assume that $V(K_k)=[k]$.
We also assume, for ease of exposition, that $W$ starts with the sequence of vertices $1, 2, \ldots, k$, where every vertex $i$ of this sequence is
considered as the first occurence of $i$ in $W$.
We also assume that the last vertex of $W$ (i.e. the vertex before the first occurence of $1$) is $3$.
Clearly, every $i \in [k]$ appears in $W$ exactly $k' \defined (k-1)/2$ times.
The vertex set of $G$ is the disjoint union of three sets $U, S$, and $T$ where

\begin{figure}
\begin{center}
\scalebox{.9}{\commentfig{\includegraphics[width=\textwidth]{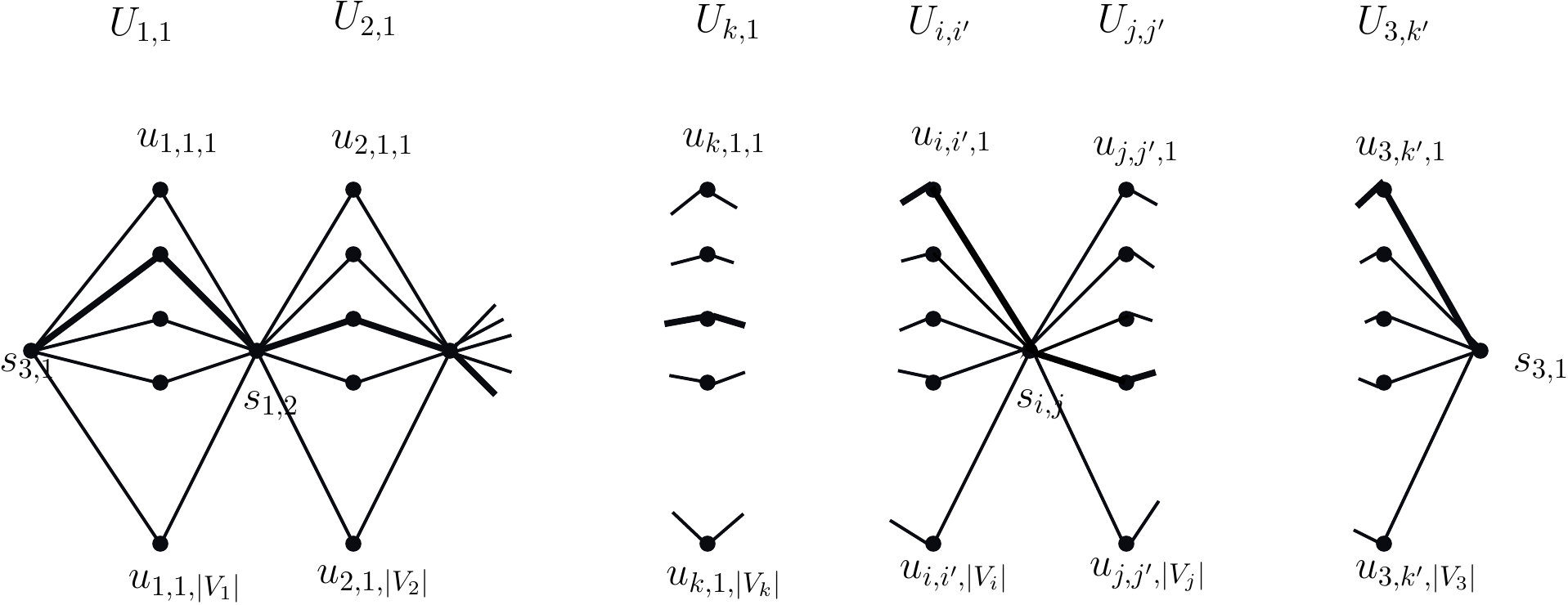}}}
\caption{The $S-U$ edges of $G$. A sample clique cycle appears in bold.}\label{segments}
\end{center}
\end{figure}

\begin{itemize}
\item $U$ consists of $k'$ copies $U_{i,1}, \ldots, U_{i,k'}$ of $V_i$ for every $i \in [k]$, for a total of $k \choose 2$ sets.
For every $i \in [k]$, we number the vertices of $V_i$ from $1$ to $|V_i|$, and
we number the vertices of every copy $U_{i,j}$ accordingly, as
$u_{i,j,1}, \ldots u_{i,j,|V_i|}$.
\item $S$ consists of $k \choose 2$ vertices $s_{i,j}$, one for every arc $i j$ of $W$,
\item $T=T_1 \cup T_2 \cup \cdots \cup T_k$, where every set $T_i$ consists of $|V_i|-1$ vertices
$t_{i,1}, t_{i,2}, \ldots, t_{i,|V_i|-1}$.
\end{itemize}

We proceed with the description of the edge set of $G$, which contains three types of edges depending on their endpoints.
Every edge has one endpoint in $U$ and the other endpoint is in one of $U, S$ or $T$.

\begin{itemize}
\item{$S-U$ edges:}
Let $e=i j$ be an arc of $W$ such that $e$ is incident to the $i'$-th (resp. $j'$-th) occurence of $i$ (resp. $j$) in $W$.
Then $s_{i,j}$ is adjacent to every vertex of $U_{i,i'}$ and to every vertex of $U_{j,j'}$ (see Figure \ref{segments}).

\item{$U-U$ edges:}
The $U-U$ edges form $|V(H)|$ vertex-disjoint paths, one path on $k'$ vertices for every $v \in V(H)$.
The path corresponding to the $\ell$-th vertex of $V_i$ is
is $u_{i,1,\ell} u_{i,2,\ell} \ldots u_{i,k',\ell}$ (see Figure \ref{immature}).

\item{$T-U$ edges:}
For every $i \in [k]$ and every $\ell \in [|V_i|-1]$,
the vertex $t_{i,\ell}$ is adjacent to $u_{i,1,\ell}$, $u_{i,1,\ell+1}$, $u_{i,k',\ell}$ and $u_{i,k',\ell+1}$
(see Figure \ref{immature}).
\end{itemize}

\noindent We prove two properties of  $G$ that are necessary to conclude the lemma.

\begin{claim}\label{pensions}
$\avgdeg   < 6 \alpha$.
\end{claim}

\begin{proof}[Proof of Claim \ref{pensions}]
Clearly, $|U| = k' \cdot |V(H)|$, $|T|=|V(H)|-k$, and $S \neq \emptyset$. Therefore, $|V(G)| > (k'+1) \cdot |V(H)|-k$.

We now proceed with the calculation of $|E(G)|$. Every $U-S$ edge is incident to a vertex of $U$, and every vertex of $U$ has two incident $U-S$ edges.
Therefore, the number of $U-S$ edges is $2 k' \cdot |V(H)|$.
The total number of $U-U$ and $U-T$ edges is $(k'+1) (V(H))$, since for every vertex $v \in V(H)$ we have a cycle that contains $k'+1$ vertices, namely all the copies of $v$ and a $t$ vertex. Thus, $|E(G)|=(3k'+1) \cdot |V(H)|$. As $|V(G)\geq t\cdot k\Rightarrow |V(G)|/(V(G)-k)\leq t/(t-1)=\alpha$, we conclude that $\avgdeg$ is
\[
\frac{2 |E(G)|}{|V(G)|} < \frac{2(3k'+1) \cdot |V(H)|}{(k'+1) \cdot (|V(H)|-k)}< 6 \cdot \frac{|V(H)|}{|V(H)|-k} < 6\alpha.
\]
\end{proof}

\begin{claim}\label{presents}
$\pw(G) \leq {k \choose 2} + 3$.
\end{claim}

\begin{proof}[Proof of Claim \ref{presents}]
Clearly, $\pw(G) \leq \pw(G \sm S) + |S| = \pw(G \sm S) + {k \choose 2}$.
We observe that the edge set of $G \sm S$ consists of $U - U$ and $T-U$ edges that form connected components $T_i \cup \cup_{j'=1}^{k'} U_{i,j'}$ as depicted in Figure \ref{immature}.
Every such component has a path decomposition of $k' \cdot |V_i|$ bags with at most $4$ vertices in a bag, as depicted in Figure \ref{teaching}.
Therefore, $\pw(G \sm S) \leq 3$ and the claim follows.
\end{proof}

\begin{figure}
\begin{center}
\commentfig{\includegraphics[scale=0.750]{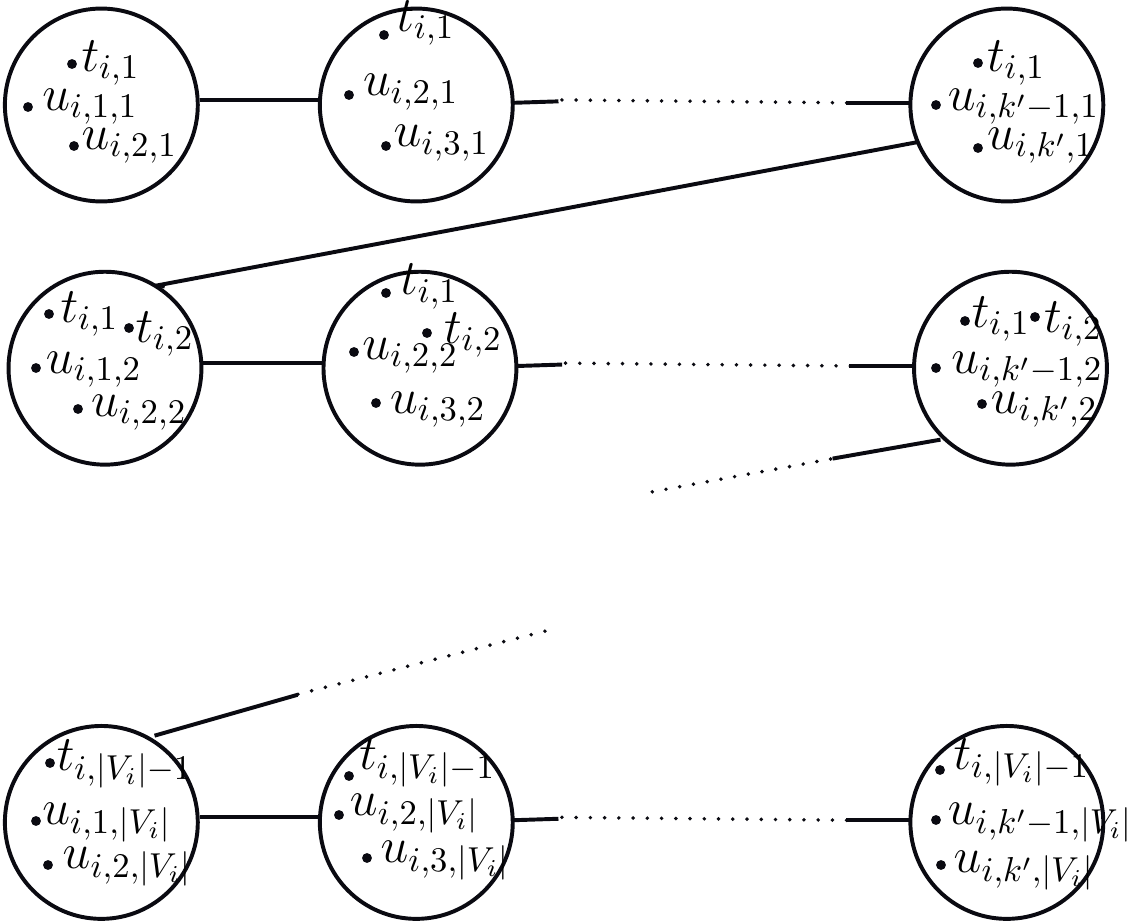}}
\caption{A path decomposition of a connected component of $G \sm S$.} \label{teaching}
\end{center}
\end{figure}

We proceed with the description of the coloring function $\chi$ and the traversal cost function $c$.
The color set is $V(H) \cup \{ white \}$. In other words, the vertices of $H$ corresponds to colors in the constructed instance $H'$; that is, there is a color in $H'$ corresponding to each vertex in $H$. All the $U-U$ and $T-U$ edges are colored white, and the $S-U$ edges are colored upon their endpoint in $U$ as follows.
For every $S-U$ edge $e$ with endpoint $u_{i,j,\ell}$, we define $\chi(e)=v_{i,\ell}$. The traversal cost $c(v,white)$ is $1$ for every $v \in V(H)$ and $c(white,white)=0$.
Finally, for every $v,v' \in V(H)$
\[
c(v,v')= \left\{ \begin{array}{ll}
 0 & \textrm{if~} v v' \in E(H) \textrm{~or~} v = v'\\
 1 & \textrm{otherwise.}
 \end{array} \right.
\]
Finally, we set $k=0$.
This completes the construction of $\inst$, which can be clearly performed in polynomial time.

\begin{figure}[ht]
\begin{center}
\commentfig{\includegraphics[scale=0.82]{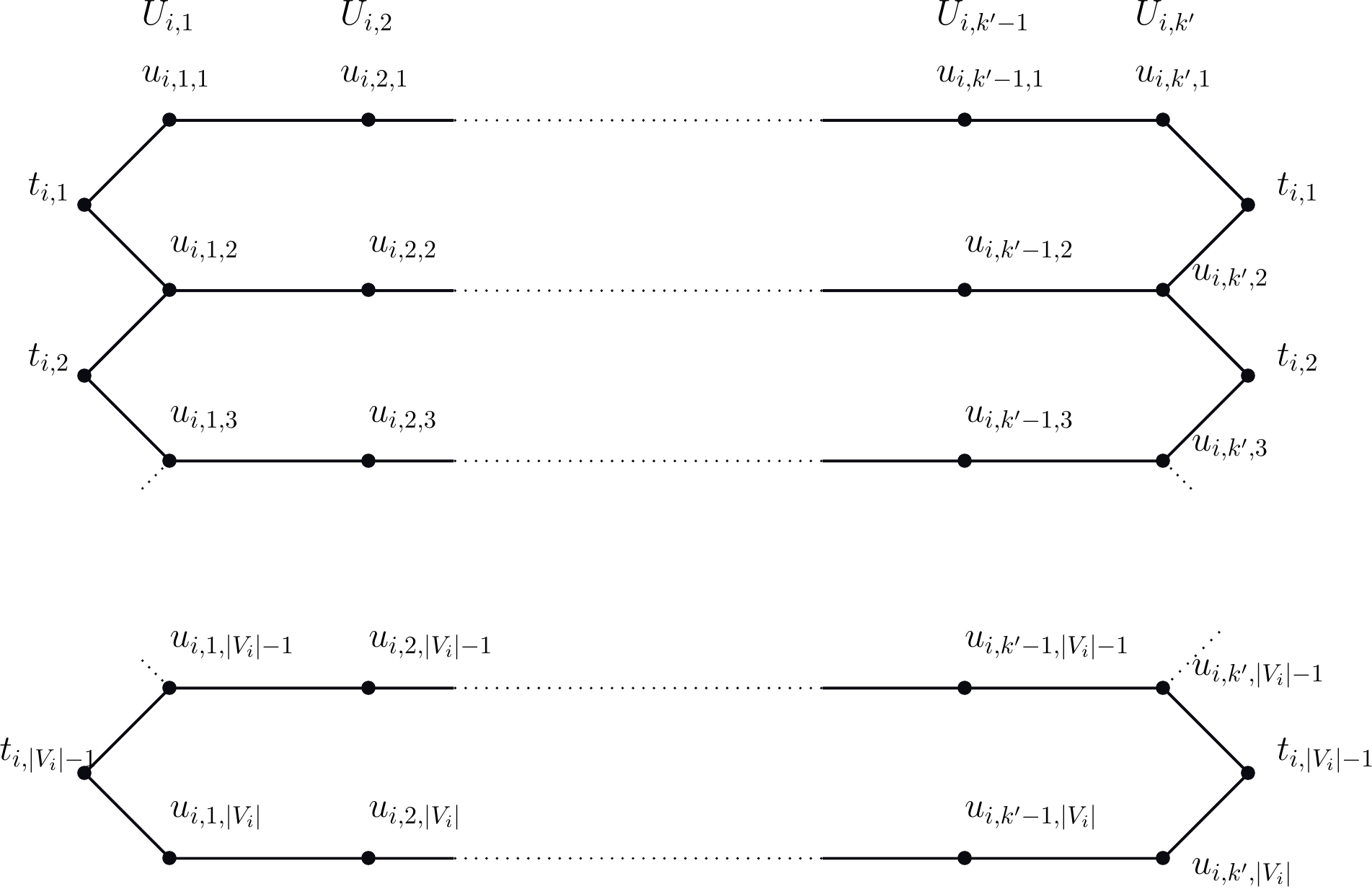}}
\caption{The $U-U$ and $T-U$ edges of $G$.} \label{immature}
\end{center}
\end{figure}

We now prove that $H$ has a $k$-clique with one vertex from every color class if and only if $G$ has a $2$-factor $F$ with $\rc(F)=0$.
Assume that $H$ has a clique $K$ with exactly one vertex from every color, and suppose without loss of generality that $K$ consists of the first vertices of each color class.
Let $F$ be the $2$-factor of $G$ consisting of the following cycles.
\begin{itemize}
\item {}
The clique cycle $G[C_K]$, where $C_K=S \cup \{ u_{i,j,1} \mid i \in [k], j \in [k'] \}$.
\item {}
The vertex cycles, one per every vertex $v \in V(H) \sm K$.
For every $i \in [k]$ and every $\ell \in [2,|V_i|]$, the vertex cycle corresponding to the vertex $v_{i,\ell} \in V(H) \sm K$ is $G[C_{i,\ell}]$, where $ C_{i,\ell} = \{ t_{i,\ell-1}, u_{i,j,\ell} \mid j \in [k'] \}$.
\end{itemize}
It is easy to see that every vertex is in exactly one of these cycles.
Furthermore, the edges of the vertex cycles are all white, incurring a reload cost of zero.
It remains to show that that the edges of the clique cycle also incur a zero cost.
Indeed, the two edges incident to a vertex $u_{i,j,\ell} \in U$ incur a cost $c(v_{i,\ell},v_{i,\ell})=0$,
and the two edges incident to a vertex $s_{i,j}$ incur a zero cost since the adjacent vertices are $u_{i,i',1}$ and $u_{j,j',1}$ that correspond to the vertices $v_{i,1}$ and $v_{j,1}$ of $K$, which are adjacent in $H$.

Conversely, suppose that $H$ contains a 2-factor $F$ with $\rc(F)=0$.
Every vertex $s_{i,j}$ is adjacent to two distinct vertices of $U$ that correspond to two distinct vertices $v,v'$ of $V(H)$.
Furthermore, since the cost at vertex $s_{i,j}$ is zero, we conclude that $v$ and $v'$ are adjacent in $H$.
In particular, $v$ and $v'$ are in different color classes.
Therefore, $s_{i,j}$ is adjacent to one vertex from each of $U_{i,i'}$ and $U_{j,j'}$.
The second edge incident to these vertices in $F$ is not white, since otherwise it incurs a positive traversal cost.
Thus, the other edge is also a $U-S$ edge.
We conclude that the $k \choose 2$ vertices $S$ are all in one cycle $C_K$ that also contains $k \choose 2$ vertices from $U$, one from every set $U_{i,i'}$.
Furthermore, two consecutive vertices of $U$ in $C_K$ correspond to two adjacent vertices of $H$.

All the remaining cycles must be in $G \sm S$, which consists of $k$ connected components $T_i \cup \cup_{i'=1}^{k'} U_{i,i'}$, for every $i \in [k]$.
The result is now apparent from Figure \ref{immature} that depicts this connected component.
If a vertex $u_{i,i',\ell}$ is in one of the remaining cycles, then all the vertices $u_{i,i'',\ell}$ for $i'' \in [k']$ are in the same cycle as $u_{i,i',\ell}$.
Therefore, if a vertex $u_{i,i',\ell}$ is in $C_K$, then all the vertices $u_{i,i'',\ell}$ for $i'' \in [k']$ are in $C_K$.
Recall that $C_K$ contains one vertex $u_{i,i',\ell}$ from every set $U_{i,i'}$.
We conclude that these vertices correspond to the same vertex $v_{i,\ell}$ of $V_i$.
We recall that consecutive $U$-vertices of $C_K$ correspond to two adjacent vertices of $H$.
Therefore, the vertices $U \cap C_K$ correspond to the vertices of a clique of $H$.
\end{proof}

Given a graph $G$ on $n$-vertices where $n$ is even, 
a {\em pairing collection} of $G$ is a collection $M$
of pairs of vertices in $G$ that forms a partition of $V(G)$.
We denote $G+M=(V(G),E(G)\cup M)$. Notice that $G+M$ can be a multi-graph as the pairings in $M$ might already be edges of $G$.
We prove the following lemma that will prove useful in our proof.

\begin{lemma}\label{daylight}
Let $G$ be a $n$-vertex graph where $n$ is even where $\pw(G)\leq k$ for some $k\geq 1$. Then $G$ has a pairing collection $M$ such that $\pw(G+M)\leq k+1$.
\end{lemma}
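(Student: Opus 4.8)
The plan is to start from a path decomposition $\DD=(B_1,\dots,B_m)$ of $G$ of width at most $k$, to choose the pairing $M$ by sorting the vertices according to the index of the \emph{last} bag in which they occur, and then to build a path decomposition of $G+M$ of width at most $k+1$ by ``prolonging the life'' of one vertex of each pair, just enough so that the two members of the pair meet in a common bag. For each $v\in V(G)$ let $\mathrm{last}(v)=\max\{j:v\in B_j\}$; by the connectivity axiom of path decompositions, $\{j:v\in B_j\}$ is an interval of the form $[\mathrm{first}(v),\mathrm{last}(v)]$. Order the vertices as $v_1,\dots,v_n$ so that $\mathrm{last}(v_1)\le\mathrm{last}(v_2)\le\cdots\le\mathrm{last}(v_n)$, and set $M=\{\{v_{2i-1},v_{2i}\}\mid 1\le i\le n/2\}$, which is a pairing collection because $n$ is even.

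Next I would define the new decomposition: for $j\in[m]$ let $E_j=\{v_{2i-1}\mid 1\le i\le n/2,\ \mathrm{last}(v_{2i-1})<j\le\mathrm{last}(v_{2i})\}$ and put $B'_j=B_j\cup E_j$, and claim that $\DD'=(B'_1,\dots,B'_m)$ is a path decomposition of $G+M$ of width at most $k+1$. I would then check the required four facts. (Width.) The intervals $I_i=(\mathrm{last}(v_{2i-1}),\mathrm{last}(v_{2i})]$, $1\le i\le n/2$, are pairwise disjoint: for $i<i'$ we have $2i\le 2i'-1$, hence $\mathrm{last}(v_{2i})\le\mathrm{last}(v_{2i'-1})$ by the sorted order, so every integer of $I_{i'}$ is strictly larger than $\mathrm{last}(v_{2i})$ while every integer of $I_i$ is at most $\mathrm{last}(v_{2i})$. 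Consequently each index $j$ lies in at most one $I_i$, so $|E_j|\le1$ and $|B'_j|\le|B_j|+1\le k+2$, i.e.\ the width is at most $k+1$. (Vertices.) $\bigcup_j B'_j\supseteq\bigcup_j B_j=V(G)=V(G+M)$. (Connectivity.) Only the vertices $v_{2i-1}$ are ever added to bags, and the set of bags of $\DD'$ containing $v_{2i-1}$ equals $[\mathrm{first}(v_{2i-1}),\mathrm{last}(v_{2i-1})]\cup(\mathrm{last}(v_{2i-1}),\mathrm{last}(v_{2i})]=[\mathrm{first}(v_{2i-1}),\mathrm{last}(v_{2i})]$, again an interval; the bag set of every $v_{2i}$ is unchanged. (Edges.) Each edge of $G$ stays covered since $B_j\subseteq B'_j$; for a pair $\{v_{2i-1},v_{2i}\}\in M$, both endpoints lie in $B'_{\mathrm{last}(v_{2i})}$, because $v_{2i}\in B_{\mathrm{last}(v_{2i})}$ by definition of $\mathrm{last}$, and $v_{2i-1}$ is either already in $B_{\mathrm{last}(v_{2i})}$ when $\mathrm{last}(v_{2i-1})=\mathrm{last}(v_{2i})$, or else $\mathrm{last}(v_{2i})\in I_i$ puts $v_{2i-1}$ into $E_{\mathrm{last}(v_{2i})}$.

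The one genuinely load-bearing step — the one I would flag as the crux — is the width bound, and in particular the choice of $M$: pairing vertices that are \emph{consecutive} in the ordering by $\mathrm{last}(\cdot)$ is exactly what forces the prolongation intervals $I_i$ to be pairwise disjoint, so that no bag receives more than one extra vertex. With an arbitrary pairing the prolongations could all pile up inside a single bag, blowing the width up by an uncontrolled amount. Everything else is a routine verification against the definition of a path decomposition, and the whole construction is clearly computable in polynomial time once $\DD$ is given.
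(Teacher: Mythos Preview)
Your proof is correct, and it takes a genuinely different route from the paper's. The paper invokes an external structural result (Takahashi--Ueno--Kajitani): every graph of pathwidth at most $k$ is a spanning subgraph of a \emph{proper $(k+1)$-path} $H$, and by the recursive definition such an $H$ has a Hamiltonian path, hence a perfect matching $M$ when $n$ is even; since $G+M\subseteq H$ and $\pw(H)=k+1$, the bound follows. Your argument is instead fully self-contained: you sort vertices by the index of their last bag, pair consecutive vertices in this order, and observe that the prolongation intervals $(\mathrm{last}(v_{2i-1}),\mathrm{last}(v_{2i})]$ are pairwise disjoint, so each bag gains at most one vertex. This is more elementary (no black-box citation needed) and makes the polynomial-time constructibility transparent. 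The two arguments are morally related---your ordering by $\mathrm{last}(\cdot)$ is essentially the simplicial elimination order that underlies the Hamiltonian path in the proper-path completion---but your presentation avoids the detour through that machinery.
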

%
%
%

\begin{proof}
We say that a vertex $v$ of a graph is {\em $k$-simplicial}
if its neighbourhood induces a complete graph on $k$ vertices. 
{\em Proper $k$-paths} are recursively defined as follows:
A complete graph on $k+1$ vertices is a proper $k$-path. 
A graph $H$ on more than $k+1$ vertices is a proper $k$-path
if it contains a $k$-simplicial vertex $v$ such that the removal of $v$
results to a proper $k$-path with a simplicial vertex that belongs in the neighbourhood  of $v$ in $H$. Notice that this recursive 
definition gives rice to sequence of simplicial vertices where every two successive vertices are adjacent and such that all vertices of $G$ appear in this sequence. Therefore every proper $k$-tree $H$
has a hamiltonian path, which, in case $H$ has an even number of vertices, implies that $H$ contains a perfect matching.
It was proved in~\cite{Tak95}  (see also~\cite{KS96siamjomput}) that every graph $G$ of pathwidth $\leq k$ is a spanning subgraph of a proper $(k+1)$-path $H$. As a perfect 
matching in $H$ is a pairing collection in $G$, the lemma follows.
\end{proof}

\begin{theorem}\label{dispense}
For every $r \in \mathbb{N}_{\geq 2}$, \rmrcf, parameterized by $\pw(G)$, is {\rm \wonehard} even when $d=2$, $k=\kmin$ and $\avgdeg$ is less than
$r+\left\{ \begin{array}{ll}
4 & \textrm{~if~} r=2 \\ 4/3 & \textrm{otherwise}.
\end{array}\right.$
\end{theorem}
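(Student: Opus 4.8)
The plan is to reduce from the \mrcf{2} instances produced in the proof of Lemma~\ref{granting}, promoting each of them to an \rmrcf instance by attaching one $Q_r$-gadget per vertex pair exactly as in the proof of Theorem~\ref{consiste}; the essential new ingredient is that we choose the pairing via Lemma~\ref{daylight}, so that the pathwidth stays bounded by a function of the \mcc-parameter. For $r=2$ there is nothing to do: the statement \emph{is} Lemma~\ref{granting}, once one notes that the graph $G$ built there actually satisfies $\avgdeg<6$, not merely $\avgdeg<6\alpha$. Indeed, in the notation of Claim~\ref{pensions} and assuming (as we may, since \mcc is polynomial-time solvable for bounded $k$) that $k\ge 3$, we have $|V(G)|=(k'+1)|V(H)|+\binom{k}{2}-k\ge(k'+1)|V(H)|$ while $|E(G)|=(3k'+1)|V(H)|$, so $\avgdeg\le\frac{2(3k'+1)}{k'+1}=6-\frac{4}{k'+1}<6$.

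So fix $r\ge 3$ and start from a \mcc instance $(H,c_H,k)$ with $k$ odd. Applying Lemma~\ref{granting} we obtain a \mrcf{2} instance $(G,\chi,c,0)$ with $d=2$, $k=\kmin=0$, $\pw(G)\le\binom{k}{2}+3$ (Claim~\ref{presents}) and $\avgdeg<6$; adding a disjoint monochromatic triangle if necessary, we may assume $|V(G)|$ is even (this affects neither the answer nor $d$, keeps $\avgdeg<6$, and does not increase the pathwidth, the triangle being a disjoint component of pathwidth $2$). By Lemma~\ref{daylight} we compute in polynomial time a pairing collection $M$ of $V(G)$ with $\pw(G+M)\le\pw(G)+1$. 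We then build $G'$ from $G$ by attaching, for each pair $\{u,v\}\in M$, a copy of $Q_r$ in the manner of Theorem~\ref{consiste}: subdivide $r-2$ edges of the underlying $K_{r+1}$ (chosen, say, to form a path, so that on each side the resulting degree-one vertices are adjacent to pairwise distinct core vertices and $G'$ stays simple), and identify $r-2$ of the degree-one vertices with $u$ and the remaining $r-2$ with $v$. All new edges receive a fresh color $\gamma$, and we extend $c$ to $c'$ by $c'(\gamma,\lambda)=0$ for every color $\lambda$; this keeps $d=2$ and $c_{\min}=\kmin=0$, and we put $k'=0$. Each gadget adds $r+1$ new vertices, all of degree $r$ in $G'$, and $\binom{r+1}{2}+(r-2)$ new edges, and raises the degree of each of $u,v$ by exactly $r-2$, so (as $\delta(G)\ge 2$) we get $\delta(G')=r$.

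It then remains to verify three things, two of them routine. \emph{Equivalence} is verbatim the argument in Theorem~\ref{consiste}: every vertex of $V(G')\setminus V(G)$ has degree exactly $r$, so any $r$-factor of $G'$ contains all gadget edges; deleting them leaves a $2$-factor of $G$ of the same reload cost, since $\gamma$-edges incur no traversal cost with any color; conversely any $2$-factor of $G$ together with all gadget edges is an $r$-factor of $G'$ of the same cost. Hence $G$ has a cost-$0$ $2$-factor iff $G'$ has a cost-$0$ $r$-factor. \emph{Pathwidth}: from a path decomposition of $G+M$ of width $\le\pw(G)+1$, each pair $\{u,v\}\in M$ (being an edge of $G+M$) lies in a common bag, and inflating that portion of the decomposition to also host the $r+1$ core vertices of the corresponding gadget (and its edges) costs at most $r+1$ extra width; since the gadgets are internally vertex-disjoint this is done for all pairs simultaneously, giving $\pw(G')\le\binom{k}{2}+r+O(1)$, which --- $r$ being a constant --- is a function of $k$ alone, so the construction (clearly polynomial) is an \fpt-reduction. \emph{Average degree}: with $n=|V(G)|$ we have $|M|=n/2$, $|V(G')|=n\cdot\frac{r+3}{2}$ and $2|E(G')|=2|E(G)|+n\bigl(\binom{r+1}{2}+r-2\bigr)=\avgdeg\cdot n+n\cdot\frac{r^2+3r-4}{2}$, whence
\[
\overline{{\sf deg}}(G')\;=\;\frac{2\,\avgdeg+r^2+3r-4}{r+3}\;=\;r+\frac{2\,\avgdeg-4}{r+3}\;<\;r+\frac{8}{r+3}\;\le\;r+\frac{4}{3},
\]
the last inequality using $r\ge 3$, and the strictness of $\avgdeg<6$ securing strict inequality when $r=3$.

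The two genuinely delicate points are exactly these: that the base instance from Lemma~\ref{granting} really has $\avgdeg<6$ (not just $<6\alpha$), which is precisely what makes the bound $r+4/3$ attainable at $r=3$; and the pathwidth bookkeeping, i.e.\ the use of Lemma~\ref{daylight} to obtain a pairing along which all $Q_r$-gadgets can be glued while keeping $\pw(G')$ bounded by a function of $k$ --- replacing it by the arbitrary pairing of Theorem~\ref{consiste} would destroy this bound.
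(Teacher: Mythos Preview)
Your proof is correct and follows essentially the same route as the paper's: for $r\ge 3$ reduce from the $r=2$ instance of Lemma~\ref{granting} by attaching one $Q_r$ per pair of a pairing chosen via Lemma~\ref{daylight}, then check equivalence, the average-degree bound, and that pathwidth remains a function of the {\mcc} parameter. Your treatment is in fact a bit tighter than the paper's in one place: you verify directly that the graph built in Lemma~\ref{granting} already satisfies $\avgdeg<6$ (not merely $<6\alpha$), which the paper tacitly relies on both when it declares the $r=2$ case ``equivalent'' to Lemma~\ref{granting} and when it plugs $|E(G)|<3n$ into Claim~\ref{connects}.
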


\begin{proof}

For $r=2$ the Theorem is equivalent to Lemma \ref{granting}.
For $r>2$ we present an $\fpt$ reduction from the base case, i.e. $r=2$.
using the same technique as in the proof of Theorem \ref{consiste}.
Namely, given an instance $\inst$ of \mrcf{2} with $d=2$ and $k=\kmin=0$ where $\avgdeg < 6$,
we construct an instance $\instprime$ of {\rmrcf} where $d=2$ and the average degree of $G'$ is less than $r+4/3$
by adding to $G$ a pair collection, and replacing every new edge by the gadget $Q_r$.
By the last part of the proof of Theorem \ref{consiste}, $G'$ has an $r$-factor of zero cost if and only if $G$ has a $2$-factor of zero cost.

\begin{claim}\label{connects}
The average degree of $G'$ is at most $r+4/3$.
\end{claim}

\begin{proof}[Proof of Claim \ref{connects}]
Let $n=|V(G)|$. Then $|E(G)| < 3n$.
On the other hand, the number of edges added by every $Q_r$ is $\frac{r(r+1)}{2}+r-2$.
Summarizing, we have
\begin{eqnarray*}
|E(G')| & <  & 3n +  \frac{n}{2} \left(\frac{r(r+1)}{2}+r-2 \right)\\
|V(G')| & =  & n + \frac{n}{2} (r+1).
\end{eqnarray*}
We conclude that the average degree of $G'$ is less than
\[
\frac{6 +  \left(\frac{r(r+1)}{2}+r-2 \right)}{1 + \frac{r+1}{2}} = \frac{r^2+3r+8}{r+3}=r+\frac{8}{r+3} \leq r+\frac{4}{3}.
\]
\end{proof}

\begin{claim}\label{handling}
$\pw(G') \leq (\pw(G)+1)\cdot \frac{r+1}{r-1}-1$.
\end{claim}

%
\begin{proof}[Proof of Claim \ref{handling}]
According to Lemma~\ref{daylight},  there is a pairing collection $M$ of $G$ such that $\pw(G + M) \leq \pw(G)+1$.

The operation of replacing each pair of $M$ by a $Q_r$ can be reflected in the path decomposition as follows. Each bag has $k+1$ vertices 
and among the edges with endpoints in this bag, at most $(k+1)/2$
of them can be pairs that belong in $M$. This means that after the insertion of 
at most $(k+1)/2$ copies of $Q_{r}$ in each bag the new path decomposition
will have width at most $k+1+(k+1)/2\cdot (r+1)=(\pw(G)+1)\cdot \frac{r+1}{r-1}$.
\end{proof}
Finally, we note that $r$ is a constant of the problem. 
Therefore, the function proven in Claim \ref{handling}, is a function of $\pw(G)$ as required.
\end{proof}

\newcommand{\alg}{{\sc DynProg}}

We conclude with the following algorithmic result.

\begin{theorem}\label{softened}
For every $r\in\mathbb{N}_{\geq 2}$, {\sc $r$-MRCF}, parameterized by $\min \{ q, \Delta(G) \}$ and
$\tw(G)$ is in {\fpt}. 
Specifically, it can be solved by an algorithm that runs in time 
$\OO^* \left( {{\min \{ q, \Delta(G) \} +r} \choose r}^{2(\tw(G)+1)} \right)$.
\end{theorem}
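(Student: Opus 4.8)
The plan is to design a dynamic programming algorithm over a nice tree decomposition $(\DD,\GG)$ of $G$, as furnished by the preliminaries, and to argue that the size of the table at each bag is bounded by ${\min\{q,\Delta(G)\}+r \choose r}^{|X_t|}$, from which the claimed running time follows. The first step is to fix, for each bag $X_t$, a notion of \emph{partial solution}: a factor $H_t$ of $G_t$ such that every vertex of $V(G_t)\sm X_t$ already has degree exactly $r$ in $H_t$, while the vertices in $X_t$ may still have smaller degree (their missing incident edges live outside $G_t$). What we must remember about such an $H_t$ in order to glue it with the rest of the graph is, for each $v\in X_t$: (i) the current degree $\deg_{H_t}(v)\in[0,r]$, and (ii) enough information about the colors of the edges of $H_t$ already incident to $v$ so that the reload cost incurred at $v$ by the still-to-be-added edges can be computed correctly. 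The key combinatorial observation is that the reload cost at $v$ is a sum over unordered pairs of incident edges of $c(\chi(e),\chi(e'))$, so what matters is only the \emph{multiset of colors} of the edges of $H_t$ incident to $v$; and this multiset has at most $r$ elements drawn from a pool of at most $\min\{q,\Delta(G)\}$ relevant colors (at most $q$ because that is the total number of colors, and at most $\Delta(G)$ because a vertex has at most $\Delta(G)$ incident edges, so at most $\Delta(G)$ distinct colors can occur at it). Hence the number of possible ``states'' of a single vertex is at most ${\min\{q,\Delta(G)\}+r \choose r}$ (multisets of size $\le r$, encoded as size exactly $r$ by padding with a dummy ``no edge'' symbol — which is where the $+r$ and the closed form come from), and a bag state is a function from $X_t$ to this set, giving the table-size bound ${\min\{q,\Delta(G)\}+r \choose r}^{|X_t|}\le {\min\{q,\Delta(G)\}+r \choose r}^{\tw(G)+1}$.

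Next I would specify the DP value: for a bag $t$ and a state $\sigma$ assigning to each $v\in X_t$ a color-multiset (of size $\le r$) together with the implied degree, let $c[t,\sigma]$ be the minimum, over all partial solutions $H_t$ of $G_t$ realizing $\sigma$ at the boundary, of the ``already-paid'' reload cost, namely $\sum_{v\in V(G_t)\sm X_t}\rc(H_t,v)+\sum_{v\in X_t}(\text{cost of pairs both of whose edges are in }H_t)$. Then I would give the recurrences for the five node types of a nice decomposition. Leaf nodes: the only state is the empty function, with value $0$. Vertex-introduce node introducing $v$: $v$ gets the empty multiset (degree $0$), and the remaining states are inherited unchanged. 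Edge-introduce node introducing $e=uv$: from each child state we either do not take $e$ (value unchanged) or take $e$, which is allowed only if $\deg(u)<r$ and $\deg(v)<r$ in the child state; then we update the multisets of $u$ and $v$ by adding color $\chi(e)$ and add to the value the incremental reload cost $\sum_{e'}c(\chi(e),\chi(e'))$ summed over the edges $e'$ already recorded at $u$, plus the analogous sum at $v$. Forget node forgetting $v$: we may only keep child states in which $\deg_{H_t}(v)=r$ (so $v$'s multiset has size exactly $r$), and we take the minimum over all such states that agree on $X_t-v$; the value is unchanged, because all of $v$'s reload cost has now been accounted for. Join node with children $t',t''$ (same bag, edge-disjoint $G_{t'},G_{t''}$): we combine a state $\sigma'$ of $t'$ and $\sigma''$ of $t''$ provided that for every $v\in X_t$ the combined degree is $\le r$; the combined multiset at $v$ is the union of the two multisets, and the value is $c[t',\sigma']+c[t'',\sigma'']$ \emph{plus} the cross terms $\sum_{e'\in E_{H_{t'}}(v),\,e''\in E_{H_{t''}}(v)}c(\chi(e'),\chi(e''))$ at each $v$ — these cross pairs are exactly the pairs of incident edges that were in neither child's accounted cost, since $E(G_{t'})\cap E(G_{t''})=\es$ and the pairs internal to each side were already paid. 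Finally, the answer to $r$-MRCF is \yes iff $c[r,\es_{\text{function}}]\le k$, using $X_r=\es$ and $G_r=G$.

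The correctness argument is the standard exchange/induction over the tree decomposition: one proves by bottom-up induction that $c[t,\sigma]$ equals the claimed minimum, using the defining properties of a nice pair (at a forget node, no future edge touches the forgotten vertex, so requiring degree $r$ there is safe and necessary; at a join node, the edge-disjointness guarantees that a partial solution of $G_t$ decomposes uniquely into partial solutions of $G_{t'}$ and $G_{t''}$, and the connectivity/consistency of the bags ensures the boundary states are compatible). For the running time, at each of the $O(|V(G)|\cdot\tw(G))$ nodes of the (polynomially sized) nice decomposition we iterate over all pairs of states, which for the join node is ${\min\{q,\Delta(G)\}+r \choose r}^{2(\tw(G)+1)}$, and all other updates are cheaper; each update costs time polynomial in $|V(G)|$ and $r$, which is absorbed into the $\OO^*(\cdot)$ notation, yielding the stated bound $\OO^*\!\left({{\min\{q,\Delta(G)\}+r \choose r}^{2(\tw(G)+1)}}\right)$. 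I expect the main obstacle to be bookkeeping the reload cost correctly across join nodes — specifically, convincing the reader that every unordered pair of co-incident edges is counted exactly once (once, at the moment both of its edges coexist in the current partial graph: either inside a single edge-introduce step, or as a cross term at the first join that merges the two sides containing them) and never double-counted. A secondary subtlety is verifying that tracking only the \emph{multiset} of incident colors (rather than the actual edge set) loses no information relevant to cost or feasibility, which hinges precisely on the symmetry $c(x_1,x_2)=c(x_2,x_1)$ and on the fact that reload cost at $v$ depends only on colors. I would state these two points as short claims and prove them carefully, leaving the routine verification of the five recurrences to the reader.
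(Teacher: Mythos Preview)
Your proposal is correct and follows essentially the same approach as the paper: dynamic programming over a nice tree decomposition, where the state at each bag vertex is the multiset of colors (equivalently, the paper's count vector $\vect{E}_{t,v}$ indexed by colors) of the at most $r$ incident edges already chosen, with the join node dominating the running time via the product of table sizes. Your treatment is in fact somewhat more explicit than the paper's on the two delicate points you flag---the exactly-once accounting of each unordered pair of co-incident edges, and the justification that at a fixed vertex only $\min\{q,\Delta(G)\}$ colors are relevant---both of which the paper handles tersely.
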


%
%

\begin{proof}
We assume that the graph $G$ is given together with a nice pair $(\DD, \GG)$ where $\DD=(T, \XX, r)$.
Indeed, by using for instance the algorithm of Bodlaender \emph{et al}.~\cite{BodlaenderDDFLP16}, we can compute in time $2^{\OO(\tw(G))} \cdot n$ a tree decomposition of $G$ of width at most $5 \tw(G)$.
Note that this running time is clearly dominated by the running time stated in Theorem~\ref{softened}.
Recall also that, by~\cite{CyganNPPRW11,Klo94}, given a tree decomposition, it can be transformed in polynomial time
into a nice pair.

We first introduce some notation.
Let $t \in T$, $v \in X_t$, and $F$ an $r$-factor of $G$.
The graph $F_t = F[V(G_t)]$ is a partial $r$-factor of $G$.
We encode the edges $E_{G_t}(v)$ incident to $v$ in $G_t$ as a vector $\vect{E}_{t,v}$ of non-negative integers indexed by the colors of the instance.
The value of the entry $\lambda$ of $\vect{E}_{t,v}$ is the number of edges of $E_{G_t}(v)$ colored $\lambda$.
Clearly, the sum of the entries of this vector, denoted by $\norm{\vect{E}_{t,v}}$ is $|E_{G_t}(v)|$.
The encoding of a single edge $e$ colored $\chi(e)=\lambda$, incident to $v$ is the vector $\vect{u}_\lambda$ that has a $1$ in entry $\lambda$ and zero elsewhere.
We now define the set of all vectors encoding a subset of $|E_{G_t}(v)$ consisting of at most $r$ edges.
In other words, these are the sets of edges incident to $v$ in $G_t$ that can be part of an $r$-factor of $G$:
\[
\EE_{t,v} \defined \{ \vect{E} \mid \vect{E} \leq  \vect{E}_{t,v}, \norm {\vect{E}} \leq r \}.
\]
Note that $|\EE_{t,v}| \leq {{q+r} \choose r}$.
Let $\vect{E}, \vect{E}' \in \EE_{t,v}$ encode two disjoint sets of edges incident to $v$.
Each of $\vect{E}, \vect{E}'$ has an associated reload cost on $v$.
The reload cost of their union $\vect{E} + \vect{E}'$  is the sum of their individual reload costs, plus the reload costs incurred by the traversals between the two sets, which is
$
c_v(\vect{E},\vect{E}') \defined \sum_{\lambda} \sum_{\lambda'} c(\lambda, \lambda') \cdot \vect{E}_\lambda \cdot \vect{E}_{\lambda'}.
$

\alglanguage{pseudocode}

\begin{algorithm}[H]
\caption{\alg}\label{begrudge}
\begin{algorithmic}[1]
\Require{An instance $\inst$ of {\rmrcf}}
\Require{A nice pair $\DD, \GG$ where $\DD=(T,r,\XX)$ is a tree decomposition of $G$.}
\Statex
\For {$t \in T$ (using a bottom-up traversal)}
    \State $\TT_k(\vect{E_t}) \gets \infty$ for every $\vect{E_t} \in \EE_t$.
    \If{$t=r$}
        \State \Return $\TT_r(()) \leq k$.
               \Comment $T_r=\es, \EE_r=\set{()}$
    \ElsIf{$t$ is a leaf node}
        \State $T_t(())=0$.
    \ElsIf{$t$ is a join node with children $t',t''$}
        \For {$\vect{E_{t'}} \in \EE_{t'}$}
            \For {$\vect{E_{t''}} \in \EE_{t''}$}
                \If {$\vect{E_{t'}}+\vect{E_{t''}} \in \EE_t$}
                    \State $cost \gets \TT_{t'}(\vect{E_{t'}}) + \TT_{t''}(\vect{E_{t''}}) + c_{X_t}(\vect{E_{t'}}, \vect{E_{t''}})$.
                    \If {$cost < \TT_t(\vect{E_{t'}}+\vect{E_{t''}})$} $\TT_t(\vect{E_{t'}}+\vect{E_{t''}}) \gets cost$
                    \EndIf
                \EndIf
            \EndFor
        \EndFor
    \ElsIf{$t$ introduces the vertex $v$ and has child $t'$}
        \For {$\vect{E_{t'}} \in \EE_{t'}$}
            \State $\TT_t( (\zero, \vect{E_{t'}}) ) = \TT_{t'}(\vect{E_{t'}})$.
            \Comment w.l.o.g. $v$ is the first vertex of $X_t$.
        \EndFor
    \ElsIf{$t$ introduces the edge $e=uv$ and has child $t'$}
    \Comment $\EE_{t'} \subseteq \EE_t$.
        \For {$\vect{E_t} \in \EE_{t'}$}
        \Comment Partial $r$-factors that do not contain $e$
            \State $\TT_t(\vect{E_t}) \gets \TT_{t'}(\vect{E_t})$
        \EndFor
        \For {$\vect{E_t} \in \EE_{t} \sm \EE_{t'}$}
        \Comment Partial $r$-factors that contain $e$
            \State $\vect{E_{t'}} \gets \vect{E_t} - (\vect{u_{\chi(e)}},\vect{u_{\chi(e)}},\zero,\ldots,\zero)$
            \State $\TT_t(\vect{E_t}) \gets \TT_{t'}(\vect{E_{t'}}) + c_u(\vect{E_{t'}},\vect{u_{\chi(e)}}) + c_v(\vect{E_{t'}},\vect{u_{\chi(e)}})$
        \EndFor
    \Else
        \Comment $t$ is a forget node that forgets $v$ and has child $t'$
        \For {$\vect{E_{t'}} = (\vect{e_v},\vect{E_t}) \in \EE_{t'}$}
            \If {$\norm{\vect{e_v}}=r$ and $\TT_{t'}(\vect{E_{t'}}) < \TT_t(\vect{E_t})$}
                \State $\TT_t(\vect{E_t}) \gets \TT_{t'}(\vect{E_{t'}})$
            \EndIf
        \EndFor
    \EndIf
\EndFor

\end{algorithmic}
\end{algorithm}

\noindent We extend this definition to any subset $X$ of vertices as $c_X(\vect{E},\vect{E}') \defined \sum_{x \in X} c_x(\vect{E},\vect{E}')$.

$\EE_t$ is the cartesian product $\times_{v \in X_t} \EE_{t,v}$, i.e. every element of $\EE_t$ is a vector of $|X_t|$ vectors that encodes a set of edges incident to $X_t$ that is possibly part of an $r$-factor of $G$.
Clearly, $|\EE_t| \leq {{q+r}  \choose r}^{|X_t|} \leq { {q+r} \choose {r}}^{w+1}$.
For every $t \in T$ we maintain a function $\TT_t : \EE_t \to \mathbb{N}_{\geq 0}$ that returns for any given $\vect{E_t} \in \EE_t$, the minimum cost of a partial $r$-factor whose edges incident to $X_t$ are encoded by $\vect{E_t}$.

The dynamic programming algorithm whose pseudo code is given in Algorithm \ref{begrudge} performs a bottom-up traversal of $T$ and updates the tables $\TT_t$ at every node, and finally returns the result by inspecting the value $\TT_r(())$ (recall that $\EE_r = \set{()}$.

As for the running time of the algorithm, we observe that the dominant running time for a node is obtained in the case of a join node,
in which case the running time is $\OO(|\EE_{t'}| \cdot |\EE_{t''}|) = \OO^* \left( {{q+r} \choose r}^{2(\tw(G)+1)} \right)$, ignoring polynomial factors.
Alternatively, we can encode the vectors of $\EE_{t,v}$ using a binary vector of length $\Delta(G)$.
In this case $\abs{\EE_t} = {\Delta(G)+r \choose r}$, implying the running time in the statement of the Theorem.
\end{proof}

%

\end{document}